\documentclass[a4paper,11pt]{article}
\usepackage[margin=1.2in]{geometry}

\renewenvironment{abstract}
 {\small
  \vspace{-1em}
  \begin{center}
  \bfseries \abstractname\vspace{-.5em}\vspace{0pt}
  \end{center}
  \list{}{
    \setlength{\leftmargin}{0.6in}%
    \setlength{\rightmargin}{\leftmargin}%
  }%
  \item\relax}
 {\endlist}

%PACKAGES
\usepackage[utf8]{inputenc}
\usepackage[T1]{fontenc}
\usepackage[textwidth=2.5cm, color=yellow]{todonotes}
\usepackage{amsmath, amsthm, amssymb}
\usepackage{graphicx}
\usepackage[colorlinks=true, citecolor=red]{hyperref}
\usepackage[ruled, linesnumbered, vlined, algo2e]{algorithm2e}
\usepackage[noadjust]{cite}
\usepackage{needspace}

% theorem style : 3.1, 3.2, etc.
\usepackage{aliascnt}
\newtheorem{theorem}{Theorem}[section]

\newaliascnt{lemma}{theorem}
\newtheorem{lemma}[lemma]{Lemma}
\aliascntresetthe{lemma}

\newaliascnt{proposition}{theorem}
\newtheorem{proposition}[proposition]{Proposition}
\aliascntresetthe{proposition}

\newaliascnt{definition}{theorem}

\aliascntresetthe{definition}

\newaliascnt{corollary}{theorem}
\newtheorem{corollary}[corollary]{Corollary}
\aliascntresetthe{corollary}

\newaliascnt{conjecture}{theorem}

\aliascntresetthe{conjecture}

\theoremstyle{remark}
\newaliascnt{claim}{theorem}

\aliascntresetthe{claim}

\newaliascnt{observation}{theorem}

\aliascntresetthe{observation}

\newaliascnt{example}{theorem}
\newtheorem{example}[example]{Example}
\aliascntresetthe{example}

\newaliascnt{remark}{theorem}
\newtheorem{remark}[remark]{Remark}
\aliascntresetthe{remark}

\newaliascnt{problem}{theorem}
\newtheorem{problem}[problem]{Open problem}
\aliascntresetthe{problem}

% corner qed
\newcommand{\cqedsymbol}{\ifmmode$\lrcorner$\else{\unskip\nobreak\hfil
\penalty50\hskip1em\null\nobreak\hfil$\lrcorner$
\parfillskip=0pt\finalhyphendemerits=0\endgraf}\fi}

%PROBLEM ENV
\usepackage{tabularx}
\usepackage{environ}
\makeatletter
\newcommand{\problemtitle}[1]{\gdef\@problemtitle{#1}}% Store problem title
\newcommand{\probleminput}[1]{\gdef\@probleminput{#1}}% Store problem input
\newcommand{\problemquestion}[1]{\gdef\@problemquestion{#1}}% Store problem question
\NewEnviron{decproblem}{
  \problemtitle{}\probleminput{}\problemquestion{}% Default input is empty
  \BODY% Parse input
  \par\addvspace{.5\baselineskip}
  \noindent
  \begin{tabularx}{\textwidth}{@{\hspace{\parindent}} l X c}
    \multicolumn{2}{@{\hspace{\parindent}}l}{\@problemtitle} \\% Title
    \textbf{Input:} & \@probleminput \\% Input
    \textbf{Question:} & \@problemquestion% Output
  \end{tabularx}
  \par\addvspace{.5\baselineskip}}
\NewEnviron{genproblem}{
  \problemtitle{}\probleminput{}\problemquestion{}% Default input is empty
  \BODY% Parse input
  \par\addvspace{.5\baselineskip}
  \noindent
  \begin{tabularx}{\textwidth}{@{\hspace{\parindent}} l X c}
    \multicolumn{2}{@{\hspace{\parindent}}l}{\@problemtitle} \\% Title
    \textbf{Input:} & \@probleminput \\% Input
    \textbf{Output:~~} & \@problemquestion% Output
  \end{tabularx}
  \par\addvspace{.5\baselineskip}}

%PROBLEM DEFS

%DEFS
 % algorithm
\newcommand{\I}{\mathcal{I}} % ideals
\newcommand{\B}{\mathcal{B}} % borders
\newcommand{\C}{\mathcal{C}} % class
\newcommand{\G}{\mathcal{G}} % hypergraph 1
\renewcommand{\H}{\mathcal{H}} % hypergraph 2
 % closure system
\newcommand{\E}{\mathcal{E}} % hyperedge
\renewcommand{\L}{\mathcal{L}} % lattice
\newcommand{\M}{\mathcal{M}} % meet
\newcommand{\J}{\mathcal{J}} % join
 % universe
%\renewcommand{\S}{\mathcal{S}} % family of sets
 % family of X
 % Sigma
\newcommand{\Sig}{(X,\Sigma)} % implicational base on ground set X

\newcommand\myrot[1]{\mathrel{\rotatebox[origin=c]{#1}{$\Rightarrow$}}}
% create four new angled double-struck arrows
\newcommand\NEarrow{\myrot{45}}

\DeclareMathOperator{\nn}{\small\NEarrow}

\DeclareMathOperator{\Min}{Min}
\DeclareMathOperator{\Max}{Max}

\DeclareMathOperator{\pred}{pred}

%DOUBLE ARROW
\usepackage{old-arrows}
\DeclareMathOperator{\persp}{\mathrel{\text{\ooalign{$\swarrow$\cr$\nearrow$}}}}

%COLORS
\definecolor{belize}{RGB}{0, 143, 216}
\definecolor{teal}{RGB}{0, 180, 166}

\begin{document}

\title{Translating between the representations \\ of a ranked convex geometry\thanks{The first two authors have been supported by the ANR project GraphEn ANR-15-CE40-0009. The last author is funded by the CNRS, France ProFan project.}}

\date{March 30, 2021}

\author{
    Oscar Defrain\thanks{LIMOS, Université Clermont Auvergne, France.}
    \addtocounter{footnote}{-1} 
    \and
    Lhouari Nourine\footnotemark
    \and
    \addtocounter{footnote}{-1} 
    Simon Vilmin\footnotemark
}

\maketitle

\begin{abstract}
    It is well known that every closure system can be represented by an implicational base, or by the set of its meet-irreducible elements.
    In Horn logic, these are respectively known as the Horn expressions and the characteristic models.
    In this paper, we consider the problem of translating between the two representations in acyclic convex geometries.
    Quite surprisingly, we show that the problem in this context is already harder than the dualization in distributive lattices, a generalization of the well-known hypergraph dualization problem for which the existence of an output quasi-polynomial time algorithm is open.
    In light of this result, we consider a proper subclass of acyclic convex geometries, namely ranked convex geometries, as those that admit a ranked implicational base analogous to that of ranked posets. 
    For this class, we provide output quasi-polynomial time algorithms based on hypergraph dualization for translating between the two representations.
    % This improves the understanding of a long-standing open problem.
    
    \vskip5pt\noindent{}{\bf Keywords:} hypergraph dualization, meet-irreducible enumeration, characteristic models, implicational bases, convex geometries, lattices.
\end{abstract}

\section{Introduction}\label{sec:introduction}

Finite closure systems arise in various fields of discrete mathematics and computer science including Horn logic~\cite{kautz1993reasoning,crama2011boolean}, relational databases~\cite{maier1983theory,mannila1992design}, lattice theory~\cite{davey2introduction,caspard2003lattices}, Formal Concept Analysis (FCA)~\cite{ganter2012formal}  and knowledge spaces~\cite{doignon2012knowledge}. 
The study of their representations and how to translate from one to another has gathered increasing attention these last decades, as witnessed by~\cite{wild1994theory,khardon1995translating,beaudou2017algorithms,habib2018representation} and the Dagstuhl Seminar 14201~\cite{dagstuhl2014}.
See~\cite{wild2017joy,rudolph2017succinctness} for recent surveys on these topics.

Among the different ways of representing a closure system, the {\em implicational bases} play a central role.
Essentially, they consist of rules $A \rightarrow B$ describing a causality relation within the closure system: a set containing $A$ must contain $B$.
If every conclusion $B$ has size one then the implicational base is called unit.
Several implicational bases can lead to the same closure system, and consequently numerous bases were defined.
Among the most frequent, one can find the Duquenne-Guigues basis having a minimum number of implications~\cite{guigues1986familles}, the unit-minimum having a minimum number of implications among unit implicational bases, or the canonical direct basis having all minimal generators~\cite{bertet2010multiple}. 
Refinements of the canonical direct basis include the $E$-basis and $D$-basis~\cite{freese1995free,adaricheva2014implicational,adaricheva2017discovery}.
Another possible representation for a closure system results from a minimum subset of elements from which it can be reconstructed. 
In Horn logic, these elements are known as the \textit{characteristic models}~\cite{khardon1995translating, hammer1995quasi}.
In lattice theory and closure systems, they are known as the \textit{meet-irreducible elements}~\cite{davey2introduction}. 
They can be found in the {\em poset of irreducibles}~\cite{barbut1970ordre,markowsky1975factorization,habib2018representation}, or in the \textit{reduced context} representing the concept lattice in FCA~\cite{ganter2012formal}.

As pointed out by Khardon in~\cite{khardon1995translating}, the utility of these two representations is not comparable.
More notably, there are cases of (unit) minimum implicational bases of exponential size in the number of meet-irreducible elements, and vice versa.
Furthermore, different complexities can arise for a same problem when considering one representation, and the other.
This is in particular the case for reasoning, abduction, or when considering dualization problems in lattices~\cite{kautz1993reasoning,babin2017dualization,defrain2019dualization}.
Hence, the problem of translating between implicational bases and meet-irreducible elements is critical in order to reap the benefits of both representations.
In Horn logic, the problem of computing the meet-irreducible elements from an implicational base is known as \textsc{CCM} (for \emph{Computing Characteristic Models}).
The problem of computing an implicational base from the meet-irreducible elements is denoted by \textsc{SID} (for \emph{Structure Identification}).
For the canonical direct and $D$-basis, output quasi-polynomial time algorithms based on hypergraph dualization were given for both translations in~\cite{mannila1992design,adaricheva2017discovery}.
An enumeration algorithm is said to be running in {\em output-polynomial} time if its running time is bounded by a polynomial in the combined size of the input and the output.
It is said to be running with {\em polynomial delay} if after polynomial-time preprocessing, the running times between two consecutive outputs and after the last output are bounded by a polynomial in the size of the input only~\cite{johnson1988generating}.
For the minimum implicational base, output-polynomial time algorithms were obtained in $k$-meet-semidistributive lattices in~\cite{beaudou2017algorithms}, and in modular lattices in~\cite{wild2000optimal}. 
In~\cite{babin2013computing}, it is shown that it is co{\sf NP}-complete in general to decide whether an implication belongs to a minimum implicational base.
However, the existence of an output-polynomial or output quasi-polynomial time algorithm constructing a minimum implicational base remains open~\cite{babin2013computing,beaudou2017algorithms,wild2017joy}.
In~\cite{khardon1995translating}, the author is most interested in unit implicational bases as they represent Horn expressions.
In that case, the translation problems \text{SID} and \text{CCM} are shown to be equivalent, and to be harder than hypergraph dualization.
Moreover, computing a unit-minimum implicational base is harder than a minimum one, as the latter can be obtained in polynomial time from the former using the algorithms in~\cite{shock1986computing,wild1995computations}, and that it is only polynomially smaller in the size of the ground set.
Yet, the existence of an output-polynomial or output quasi-polynomial time algorithm solving the problem in both contexts remains open.

In this paper, we focus on unit implicational bases and acyclic convex geometries, i.e., closure systems satisfying the anti-exchange property~\cite{edelman1985theory} and admitting an acyclic implication-graph~\cite{wild1994theory}.
This class has been widely considered in the literature~\cite{hammer1995quasi,wild1994theory,boros2009subclass}.
Even when restricted to this class, we show that the problem of translating between one representation and the other is harder than the dualization in distributive lattices, a generalization of the hypergraph dualization problem.
For this problem, the existence of an output quasi-polynomial time algorithm is open \cite{babin2017dualization,defrain2019dualization}.
Not only this generalizes the observation of Khardon in~\cite{khardon1995translating}, but it surprisingly holds in a  very low class of closure spaces.
In light of this result, we then consider a proper subclass of acyclic convex geometries, namely ranked convex geometries, as those that admit a ranked implicational base analogous to that of ranked posets. 
For this class, we provide output quasi-polynomial time algorithms based on hypergraph dualization for translating between the two representations.
The two algorithms proceed as follows.
For the enumeration of the meet-irreducible elements (\textsc{CCM}), the first algorithm constructs solutions from the join-irreducible elements by listing all the maximal sets which do not imply them, rank by rank.
It relies on a recursive partition of the solutions conducted at each step of the algorithm.
As for the construction of a unit-minimum implicational base (\textsc{SID}), the second algorithm relies on a bijection between hypergraph transversals and the minimal generators that belong to the base.
If in addition the size of the premises in the implicational base, or the size of the meet-irreducible in the lattice are bounded by constants, then the algorithm is shown to perform in output-polynomial time using the algorithm of Eiter and Gottlob in~\cite{eiter1995identifying}.
These positive and negative results improve the understanding of a long-standing open problem.

The rest of the paper is structured as follows. 
In Section~\ref{sec:preliminaries} we introduce concepts and notations that will be used throughout the paper.
In Section~\ref{sec:hardness} we give hardness results in acyclic convex geometries in relation with the dualization problem in distributive lattices. 
Section~\ref{sec:ranked} presents ranked convex geometries, while Section~\ref{sec:mis} and~\ref{sec:tr} are devoted to the aforementioned algorithms.
We conclude with perspectives in Section~\ref{sec:conclusion}.

\section{Preliminaries}\label{sec:preliminaries}

All the objects considered in this paper are finite.
For a set $X$ we denote by $2^X$ the set of all subsets of $X$.

A {\em partial order} $P$ on a set $X$ (or {\em poset}) is a binary relation $\leq$ on $X$ which is reflexive, anti-symmetric and transitive, denoted by $P=(X,\leq)$.
Two elements $x$ and $y$ of $P$ are said {\em comparable} if $x \leq y$ or $y \leq x$, and {\em incomparable} otherwise.
We~note $x<y$ if $x\leq y$ and $x\neq y$.
We say that $x$ \emph{covers} $y$ ($y$ is covered by $x$) and denote $y \prec x$ if $y \leq x$ and there
is no $z\in X$ such that $x<z<y$.
In this case, we also say that $x$ is a \emph{successor} of $y$, and $y$ a \emph{predecessor} of $x$.
A~subset of a poset in which no two distinct elements are comparable is called an {\em antichain}.
If $A\subseteq X$, then $P[A]$ denotes the order induced by the elements of~$A$, and $P-A$ the poset $P[X\setminus A]$.
An {\em upper bound} of $x$ and $y$ is an element $u \in X$ such that $x \leq u$ and $y \leq u$. 
A {\em lower bound} for $x$ and $y$ will satisfy the dual statement.
If moreover $u$ is minimum among all upper bounds of $x$ and $y$, then it is called {\em least upper bound} (also known as {\em supremum} or {\em join}) of $x$ and $y$ and is denoted by $x \lor y$. 
The {\em greatest lower bound} $x \land y$ of $x$ and $y$ (or {\em infimum}, {\em meet}) is defined dually. 
Note that $x \lor y$ and $x \land y$ may or may not exist in general.
A {\em lattice} is a poset in which every two elements have an infimum and a supremum; see~\cite{birkhoff1940lattice,davey2introduction}. 
It has in particular a maximum element, the \emph{top}, and a minimum one, the \emph{bottom}.
An element covering the bottom is an \emph{atom}, while an element covered by the top is a \emph{co-atom}.
A lattice is called {\em Boolean} if it is isomorphic to $(2^X,\subseteq)$ for some set $X$.
It is called {\em distributive} if for any three elements $x,y,z$ of the lattice,
\[
  x \wedge (y \vee z)=(x \wedge y) \vee (x \wedge  z).
\]
For $x \in X$, we call {\em principal ideal} (or just {\em ideal}) of $x$ the set $\downarrow x = \{ y \in X \mid y \leq x \}$.
Analogously, $\uparrow x = \{ y \in X \mid y \geq x \}$ is the {\em (principal) filter} of $x$.
More generally, we set the ideal $\downarrow S$ of $S \subseteq X$ as $\downarrow S=\bigcup_{x\in S} \downarrow x$ and its filter accordingly.

We define notions related to closure systems.
A map $\phi : 2^X\rightarrow 2^X$ is a {\em closure operator} on $X$ if for all $A,B\subseteq X$:
\begin{enumerate}
    \item $A \subseteq \phi(A)$;
    \item $A \subseteq B$ implies $\phi(A) \subseteq \phi(B)$; and
    \item $\phi(A) = \phi(\phi(A))$.
\end{enumerate}
A {\em closed set} of $X$ w.r.t.~$\phi$ is a set $C \subseteq X$ such that $\phi(C)=C$.
The set of all closed sets of $X$ w.r.t.~$\phi$ is denoted by $\C_{\phi}$.
A pair $(X, \phi)$ where $\phi$ is a closure operator on $X$ is called {\em closure space}.
It is called {\em standard} if moreover
\begin{enumerate}
    \item $\phi(\emptyset) = \emptyset$; and
    \item $\phi(x)\setminus \{ x \}$ is closed for all $x \in X$.
\end{enumerate}
In this paper, all closure spaces are considered standard, a common assumption~\cite{gratzer2016lattice}.
A {\em closure system} is a pair $(X, \C)$ where $\C \subseteq 2^X$, $X \in \C$ and $C_1 \cap C_2 \in \C$ for all $C_1, C_2 \in \C$.
It is well known that to every closure space $(X,\phi)$ corresponds a closure system $(X, \C_\phi)$, and that to every closure system $(X,\C)$ corresponds a closure space $(X,\phi)$ where $\C = \C_{\phi}$.
Furthermore, $\L_\phi=(\C_{\phi},\subseteq)$ is a lattice whenever $(X,\phi)$ is a closure space~\cite{birkhoff1940lattice, davey2introduction}. 
Then $C_1\wedge C_2 = C_1\cap C_2$ and $C_1\vee C_2 = \phi(C_1\cup C_2)$.
The lattice of a closure space is given in Figure~\ref{fig:lattice}.

We define meet-irreducible elements.
Let $(X,\C=\C_\phi)$ be a closure system with closure operator $\phi$.
A set $M \subseteq X$ is a {\em meet-irreducible element} of $\C$ if $M\in \C$, $M\neq X$, and for all $C_1, C_2 \in \C$ such that $M=C_1\cap C_2$ it follows that $C_1 = M$ or $C_2 = M$. 
Similarly, a set $J \subseteq X$ is a {\em join-irreducible element} of $\C$ if $J\in \C$, $J\neq \emptyset$ and for all $C_1, C_2 \in \C$ such that $J=\phi(C_1 \cup C_2)$ it follows that $J = C_1$ or $J = C_2$ .
We denote by $\J(\C)$ and $\M(\C)$ the sets of join-irreducible and meet-irreducible elements of $\C$.
In the lattice $\L=(\C,\subseteq)$, $\J(\C)$ and $\M(\C)$ respectively correspond to the elements that have a unique predecessor and a unique successor; see Figure~\ref{fig:lattice} for an example.
In the following, we will interchangeably note $\M(\L)$ and $\M(\C)$ whenever $\L=(\C,\subseteq)$.
Let $J,M\in \C$.
We define useful notations from~\cite{ganter2012formal} for closure systems using underlying lattice structure.
We~note $J\nearrow M$ whenever $M$ is maximal in $\L-\!\!\uparrow\!J$, and $J \swarrow M$ whenever $J$ is minimal in $\L-\!\!\downarrow\!M$.
If $J\nearrow M$ and $J\swarrow M$ then we note $J \persp M$.
We point out here that $M$ and $J$ are element of $\L$, hence subsets of $X$.
On the example of Figure~\ref{fig:lattice}, observe that $J\nearrow M$ for $J=\{2\}$ and $M=\{1,4\}$.
Also, $J \swarrow M$ for $M=\{1,4\}$ and $J=\{2\}$.
Hence, $J \persp M$ in that case.
In the following, we extend the $\nearrow$ notation to any subset $B\subseteq X$ by defining $B^\nearrow=\Max_\subseteq\{M \in \C \mid B\not\subseteq M\}$.
Then $M\in J^\nearrow$ whenever $J\nearrow M$ for any two $M,J\in \C$.
We put $j^\nearrow=\{j\}^\nearrow$ for any $j\in X$.
It~is well known that the family of sets $\{J^\nearrow \mid J\in \J(\C)\}$ is a (possibly overlapping) set covering of $\M(\C)$. 
See for instance~\cite{mannila1992design,wild1995computations,wild2017joy}.
Since $\C$ is considered standard, another well-known property is that to every $J\in \J(\C)$ corresponds a unique $j\in X$ such that $J=\phi(j)$.
Accordingly, $\J(\C)$ coincides with $\{\phi(j) \mid j \in X \}$, and $J^\nearrow$ with $j^\nearrow$ for $j$ such that $\phi(j)=J$.

We now turn to definitions on implications, and refer to~\cite{wild2017joy,bertet2018lattices} for recent surveys on this topic.
An implicational base $\Sig$ is a set $\Sigma$ of implications of the form $A \rightarrow B$ where $A\subseteq X$ and $B\subseteq X$. 
In this paper we only consider $\Sigma$ in its equivalent {\em unit} form where $|B|=1$ for every implication, and denote by $A\rightarrow b$ such implications.
We call {\em premise} of $A\rightarrow b$ the set $A$, and $b$ its {\em conclusion}.
We call {\em size} of $\Sigma$ and note $|\Sigma|$ the number of implications in~$\Sigma$.
We call {\em dimension} of $\Sigma$ the size of the largest premise in~$\Sigma$.
% If there exists a fixed constant $c\in \mathbb{N}$ such that $\Sigma$ is of dimension $c$ then we say that $\Sigma$ is of {\em bounded dimension}.

%
A~set $S$ is {\em closed} in $\Sigma$ if for every implication $A\rightarrow b$ of $\Sigma$, at least one of $b\in S$ and $A\not\subseteq S$ holds.
To~$\Sigma$ we associate the {\em closure operator} $\phi$ which maps every subset $S\subseteq X$ to the smallest closed set  $\phi(S)$ of $\Sigma$ containing~$S$.
Then we say that $S$ implies $x\in X$ if $x\in \phi(S)$.
We will interchangeably denote by $\C_{\Sigma}$ and $\C_{\phi}$ the set of all closed sets of~$\Sigma$.
Observe that $\Sig$ defines a closure space, hence that $(X,\C_{\Sigma})$ defines a closure system and $\L_\Sigma=(\C_\Sigma,\subseteq)$ a lattice.
Note that to a single closure system can correspond several implicational bases.
We say that two implicational bases $\Sigma$ and $\Sigma'$ are equivalent, denoted by $\Sigma \equiv \Sigma'$, if $\C_\Sigma=\C_{\Sigma'}$.
An implicational base $\Sigma$ is called {\em irredundant} if $\Sigma\setminus \{ A\rightarrow b \}\not\equiv \Sigma$ for all $A\rightarrow b\in \Sigma$. 
An implicational base is {\em unit-minimum} if it is of minimum size among all equivalent unit implicational bases.
A set $A\subseteq X$ is a {\em minimal generator} of $b$ if $b\in \phi(A)$ and $b\not\in \phi(A\setminus \{x\})$ for any $x\in A$.
The {\em implication-graph} of $\Sig$ is the directed graph $G(\Sigma)$ defined on vertex set $X$ and where there is an arc between $x$ and $y$ if there exists $A\rightarrow b\in \Sigma$ such that $x\in A$ and $y=b$.
An implicational base is called {\em acyclic} if its implication-graph has no directed cycle.
A closure space is called {\em acyclic} if it admits an acyclic implicational base.
If moreover the premises in $\Sigma$ are of size one, then the closure lattice $\L_\Sigma=(\C_\Sigma,\subseteq)$ is distributive, and this is in fact a characterization~\cite{birkhoff1940lattice,davey2introduction}.
Throughout the paper, for a vertex $j$ of $G(\Sigma)$ we shall note $j^-$ the set of all predecessors of~$j$ in $G(\Sigma)$, and $j^+$ all its successors.

Observe that if $(X, \C_\phi)$ is a closure system, then it can be represented and reconstructed from a well-chosen implicational base, or from its meet-irreducible elements.
For the latter case, one has to close $\M(\C_{\phi})$ under intersection to recover the whole family $\C_{\phi}$. 
For the former one, we have to consider $\Sigma$ such that $\C_{\Sigma} = \C_{\phi}$. 
Such a $\Sigma$ always exists: $\Sigma = \{ A \rightarrow \phi(A) \mid A \subseteq X \}$ is an expensive but sufficient representation of $\phi$.
It can furthermore be turned into a unit implicational base.
An example of a closure system represented by its corresponding lattice, a unit-minimum implicational base, and the set of its meet-irreducible elements is given in Figure~\ref{fig:lattice}.

\begin{figure}
  \center
  \includegraphics[scale=1.1]{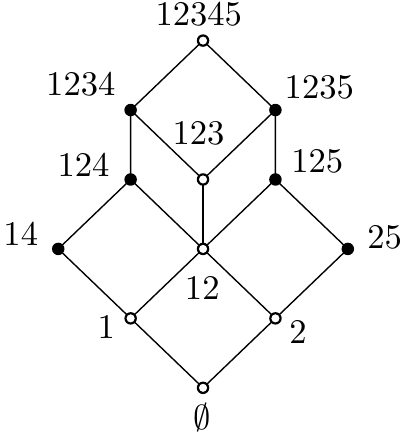}
  \caption{The closure lattice $\L_{\Sigma} = (\C_{\Sigma}, \subseteq)$ of the implicational base $\Sigma=\{4 \rightarrow 1, 5 \rightarrow 2,\allowbreak{} {3 \rightarrow 1}, 3 \rightarrow 2, 45 \rightarrow 3\}$ 
  on ground set $X=\{1,2,3,4, 5\}$. 
  Meet-irreducible elements are represented by black vertices.
  For better readability, closed sets are denoted without braces in the lattice, i.e., 
  $123$ stands for $\{1,2,3\}$.}\label{fig:lattice}
\end{figure}

Before diving into special classes of closure systems, let us include the definitions of the problems we consider in this paper.
As specified in the introduction, the names \textsc{CMI}, \textsc{CCM} and \textsc{SID} below originally come from the problems of translating between Horn representations and their characteristic models~\cite{khardon1995translating}: they stand for {\em Characteristic Models Identification}, {\em Computing Characteristic Models} and {\em Structure Identification}.
Their equivalence with the problems of translating between implicational bases and meet-irreducible elements is well established; see for instance~\cite{beaudou2017algorithms,wild2017joy}.
Note that the last problem calls for constructing a unit-minimum implicational base.
This problem is harder than the one of computing a minimum implicational base, as one can compute a minimum implicational base in polynomial time from a unit-minimum one using the algorithms in~\cite{shock1986computing,wild1995computations}, and that the size of these two implicational bases only differ by a factor $|X|$ being part of the input.
It is not clear whether the opposite direction holds, as it was shown in~\cite{hammer1993optimal} that deciding whether a unit implicational base can be reduced in size is {\sf NP}-hard.

\begin{decproblem}
  \problemtitle{Meet-irreducible elements identification (\textsc{CMI})}
  \probleminput{An implicational base $\Sig$ and a family of sets $\M\subseteq 2^X$.}
  \problemquestion{Is $\M=\M(\C_{\Sigma})$?}
\end{decproblem}

\begin{genproblem}
  \problemtitle{Meet-irreducible elements enumeration (\textsc{CCM})}
  \probleminput{An implicational base $\Sig$.}
  \problemquestion{The set $\M(\C_{\Sigma})$.}
\end{genproblem}

\begin{genproblem}
  \problemtitle{Implicational base identification (\textsc{SID})}
  \probleminput{Two sets $X$ and $\M\subseteq 2^X$.}
  \problemquestion{A unit-minimum implicational base $\Sig$ such that $\M=\M(\C_{\Sigma})$.}
\end{genproblem}

We point that \textsc{CMI} is a decision problem while \textsc{CCM} and \textsc{SID} are generation problems.
As $\{j^\nearrow \mid j\in X\}$ is a set covering of $\M(\C)$, it is well known that the existence of an output-polynomial time algorithm for \textsc{CCM} can be reduced to the existence of one enumerating $j^\nearrow$ for all $j\in X$; see~\cite{wild1995computations,wild2017joy}.
Repetitions are either avoided using exponential memory, or by running the algorithm again on each output to check whether the solution has already been outputted before, in the fashion of~\cite[Lemma 5.1]{bonamy2019enumeratingkt} and at the cost of an increasing complexity.
In the general case, it can be seen using a result of Babin and Kuznetsov, and of Kavvadias et al.~in~\cite{kavvadias2000generating,babin2017dualization} on the intractability of generating the co-atoms of a lattice that the computation of $j^\nearrow$ is impossible in output-polynomial time unless {\sf P$=$NP}.
It is however a long-standing open problem whether such a result can be inferred for \textsc{CCM}~\cite{khardon1995translating,beaudou2017algorithms,wild2017joy}.

Let us now consider particular closure systems. 
A~closure space $(X,\phi)$ satisfies the {\em anti-exchange property} if for all $x\neq y$ and all closed sets $A\subseteq X$,
\[
    x\in \phi(A\cup \{y\})\ \text{and}\ x\not\in A\ \text{imply}\ y\not\in \phi(A\cup \{x\}).
\]
A standard closure space that satisfies the anti-exchange property is called a {\em convex geometry}.
Convex geometries are known to include acyclic closure spaces, also known as \emph{poset type} closure operators~\cite{wild1994theory,santocanale2014lattices, wild2017joy}.
Sometimes during the paper, we will refer to acyclic closure spaces as {\em acyclic convex geometries} (or \emph{ACG} in short).
It is known from~\cite{adaricheva2003join} that lattices of convex geometries are both join-semidistributive and lower-semimodular. 
As a consequence, if $(X,\phi)$ is a convex geometry and $(X, \C_{\phi})$ is its associated closure system, then $J\nearrow M$ implies $J\persp M$ for all $J\in \J(\C_{\phi})$ and $M\in \M(\C_{\phi})$~\cite[Corollary~4.6.3]{stern1999semimodular}, and the set $\{j^\nearrow \mid j\in X\}$ defines a partition of~$\M(\C_{\phi})$.
Consequently, meet-irreducible elements can be enumerated from join-irreducible elements with no need of handling repetitions in that case.
This yields the next proposition (the factor two comes from the delay between the output of the last element in $j^\nearrow$, $j\in X$ and the first element in $j'^\nearrow$ for two consecutive $j,j'\in X$, $j'\neq j$).

\begin{proposition}\label{prop:meets-reduces-to-colors}
    Let $f:\mathbb{N}\rightarrow\mathbb{N}$ be a function and $(X,\C_{\Sigma})$ be the closure system of a given acyclic implicational base $\Sig$.
    Then there is an algorithm enumerating the set $\M(\C_{\Sigma})$ of meet-irreducible elements of $\C_{\Sigma}$ with delay at most
    $2\cdot f(|X|+|\Sigma|)$
    whenever there is one enumerating $j^\nearrow$ with delay $f(|X|+|\Sigma|)$ given any $j\in X$.
\end{proposition}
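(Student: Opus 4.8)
The plan is to reduce the enumeration of $\M(\C_\Sigma)$ to $|X|$ independent runs of the assumed algorithm, one for each ground element $j \in X$, and simply to concatenate their outputs. The correctness of this reduction will rest entirely on the fact, recalled above for convex geometries, that $\{j^\nearrow \mid j \in X\}$ forms a \emph{partition} of $\M(\C_\Sigma)$. Since $\Sigma$ is acyclic, $(X,\C_\Sigma)$ is the closure system of an acyclic convex geometry, so this property applies: the blocks $j^\nearrow$ are pairwise disjoint, each is contained in $\M(\C_\Sigma)$, and their union is all of $\M(\C_\Sigma)$. I would also note that no block is empty, since $\emptyset$ is a closed set avoiding every $j \in X$, so that $\{M \in \C_\Sigma \mid j \notin M\} \neq \emptyset$ always admits maximal elements; this ensures that none of the individual runs stalls, although it is not strictly needed for correctness.

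First I would fix an arbitrary ordering $j_1, \dots, j_n$ of $X$ and define the algorithm to run, for $i = 1, \dots, n$ in turn, the assumed procedure $\mathcal{A}_{j_i}$ enumerating $j_i^\nearrow$ with delay $f(|X|+|\Sigma|)$, forwarding each of its outputs to the global stream without any duplicate checking. Correctness is then immediate from the partition property: every $M \in \M(\C_\Sigma)$ belongs to exactly one block $j_i^\nearrow$ and is therefore produced exactly once, while every produced set lies in some $j_i^\nearrow \subseteq \M(\C_\Sigma)$ and is hence meet-irreducible. In particular, the disjointness of the blocks is what lets us dispense with the repetition handling that burdens the general (merely covering) case, and this is the key structural point the argument exploits.

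It remains to bound the delay. Within a single run $\mathcal{A}_{j_i}$, consecutive outputs are separated by at most $f(|X|+|\Sigma|)$ by hypothesis, and so are the gap before its first output and the gap after its last. The only additional gaps created by concatenation occur at the $n-1$ transitions between consecutive runs. At the transition from $j_i$ to $j_{i+1}$, the time elapsed between the last output of $\mathcal{A}_{j_i}$ and the first output of $\mathcal{A}_{j_{i+1}}$ is at most the time for $\mathcal{A}_{j_i}$ to halt after its final output plus the time for $\mathcal{A}_{j_{i+1}}$ to reach its first output; each of these is bounded by $f(|X|+|\Sigma|)$, giving a transition gap of at most $2 \cdot f(|X|+|\Sigma|)$. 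Taking the maximum over all gaps yields the claimed delay.

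I expect the only genuinely delicate point to be the accounting of this factor two, i.e.\ recognising that the worst-case delay is realised precisely at block boundaries, where a separate ``after-last-output'' phase of one run and a ``before-first-output'' phase of the next add up, whereas every gap internal to a run costs only $f(|X|+|\Sigma|)$. Everything else follows directly from the partition property, which is quoted from the literature on convex geometries and needs no re-proof here.
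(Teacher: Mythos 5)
Your proposal is correct and matches the paper's own (largely implicit) argument: the paper likewise justifies the proposition by the fact that $\{j^\nearrow \mid j\in X\}$ partitions $\M(\C_{\Sigma})$ in (acyclic) convex geometries, so the $|X|$ runs can be concatenated without duplicate handling, and it attributes the factor two to exactly the block-boundary gap you identify, namely the time after the last output of one run plus the time before the first output of the next.
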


We end the preliminaries with a few notions from hypergraph theory.
A {\em hypergraph} $\H$ is a couple $(V(\H),\E(\H))$ where $V(\H)$ is the set of {\em vertices}, and $\E(\H)\subseteq 2^{V(\H)}$ is the set of {\em hyperedges} (or simply {\em edges}).
The size of $\H$, denoted by $|\H|$, is the sum of the sizes of $V(\H)$ and $\E(\H)$, and the \emph{dimension} of $\H$ is the largest size of an edge in $\E(\H)$.
% A hypergraph $\H$ is said to be of {\em bounded dimension} if there exists a fixed constant $c\in \mathbb{N}$ such that $|E|\leq c$ for every edge $E$ of $\H$.
A~{\em transversal} in a hypergraph $\H$ is a set of vertices that intersects every edge of $\H$.
An {\em independent set} in a hypergraph $\H$ is a set of vertices that contains no edge of $\H$ as a subset.
We say that a transversal (resp.~independent set) is {\em minimal} (resp.~{\em maximal}) if it is inclusion-wise minimal (resp.~maximal).
The sets of all minimal transversals and all maximal independent sets of $\H$ are respectively denoted by $Tr(\H)$ and $MIS(\H)$.
The two problems of enumerating $Tr(\H)$ and $MIS(\H)$ given $\H$ are denoted by \textsc{Trans-Enum} and \textsc{MIS-Enum}.
It is well known that a set $T\subseteq V(\H)$ is a minimal transversal of $\H$ if and only if its complementary $V(\H)\setminus T$ is a maximal independent set of $\H$, hence that both \textsc{Trans-Enum} and \textsc{MIS-Enum} are polynomially equivalent~\cite{eiter2008computational}.
In the decision version of \textsc{Trans-Enum}, one must decide, given two hypergraphs $\H$ and $\G$, whether $\H$ and $\G$ are such that $\E(\G)=Tr(\H)$.
This problem goes by the name of \textsc{Hypergraph Dualization} in the literature.
It is known that it admits a polynomial-time algorithm if and only if \textsc{Trans-Enum} or \textsc{MIS-Enum} admits an output-polynomial time algorithm~\cite{bioch1995complexity,eiter2008computational}.
For the first problem, the best known algorithm runs in quasi-polynomial time $N^{o(\log N)}$ where $N=|\H|+|\G|$.
It is due to Fredman and Khachiyan~\cite{fredman1996complexity}, and it is a long-standing open problem whether it can be improved to run in polynomial time.
It has later been improved by Tamaki in~\cite{tamaki2000space} to perform using polynomial space.
If moreover the dimension of the hypergraph is known to be bounded by some fixed constant,
%bounded dimension,
then the problem is known to be solvable in polynomial time~\cite{eiter1995identifying}.
In~\cite{khardon1995translating} it is shown that \textsc{CMI} is harder than \textsc{Hypergraph Dualization}.
The goal of this paper is to prove a stronger result in acyclic convex geometries, and to show that the two problems become equivalent in a subclass of acyclic convex geometries.
We refer to~\cite{eiter1995identifying} and to the survey~\cite{eiter2008computational} for further details on hypergraph dualization.

\section{Dualization in distributive lattices}\label{sec:hardness}

We show that translating between acyclic implicational bases and meet-irreducible elements is at least as hard as the {\em dualization in distributive lattices}, a generalization of the hypergraph dualization problem.
For this problem, the existence of a quasi-polynomial time algorithm is an open question \cite{babin2017dualization,defrain2019dualization}.

Let us first define the {\em dualization problem} in lattices given by implicational bases.
Let $\L_\Sigma=(\C_\Sigma,\subseteq)$ be a lattice given by an implicational base $(X,\Sigma)$, and $\B^+$ and $\B^-$ be two antichains of $\L_\Sigma$.
We say that $\B^+$ and $\B^-$ are {\em dual} in $\L_\Sigma$ if 
\begin{equation}
    \downarrow\B^+\cap \uparrow\B^-=\emptyset\ \text{ and}\ \downarrow\B^+ \cup \uparrow \B^-=\,\C_\Sigma.\label{eq:duality}
\end{equation}
In other words, $\B^+$ and $\B^-$ are dual if $\B^+=\Max_\subseteq\{F\in \C_\Sigma \mid A\not\subseteq F\ \text{for any}\allowbreak\ A\in \B^-\}$, or equivalently if $\B^-=\Min_\subseteq\{F\in \C_\Sigma \mid F\not\subseteq A\ \text{for any}\ A\in \B^+\}$.
Then, the dualization problem in lattices given by implicational bases is defined as follows.

\begin{decproblem}
  \problemtitle{Dualization in lattices given by implicational bases (\textsc{Dual})}
  \probleminput{An implicational base $\Sig$ and two antichains $\B^+,\B^-$ of $\L_\Sigma$.}
  \problemquestion{Are $\B^+$ and $\B^-$ dual in $\L_\Sigma$?}
\end{decproblem}

Note that the lattice $\L_\Sigma$ is not given.
Only $\Sig$ is, which is a crucial point.
It is well known that \textsc{Dual} generalizes \textsc{Hypergraph Dualization} where the implicational base is empty, hence when the lattice is Boolean~\cite{eiter2008computational,nourine2015beyond}.
Recently, it was proved that the problem is co{\sf NP}-complete in general~\cite{babin2017dualization}.
This result holds even when the premises in the implicational base are of size at most two~\cite{defrain2019dualization}.
The case where the implicational base only has premises of size one captures distributive lattices.
In that case, the best known algorithm runs in sub-exponential time~\cite{babin2017dualization}, while the existence of one running in quasi-polynomial time is an open question~\cite{babin2017dualization,defrain2019dualization}.
Quasi-polynomial time algorithms are known for proper subclasses, including distributive lattices coded by product of chains~\cite{elbassioni2009algorithms}, or those coded by the ideals of an interval order \cite{defrain2019dualization}.

The next theorem suggests that translating between implicational bases and meet-irreducible elements is a tough problem, even when restricted to acyclic convex geometries.

\begin{theorem}\label{thm:dual-hard}
    There is a polynomial-time algorithm solving \textsc{Dual} in distributive lattices if there is one solving \textsc{CMI} in acyclic convex geometries.
\end{theorem}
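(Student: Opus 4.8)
The plan is to reduce `\textsc{Dual}` restricted to distributive lattices to `\textsc{CMI}` in acyclic convex geometries. We are given an implicational base $\Sig$ with premises of size one (so $\L_\Sigma$ is distributive), together with two antichains $\B^+,\B^-$, and we must decide duality~\eqref{eq:duality} using an oracle for `\textsc{CMI}`. Recall that `\textsc{CMI}` takes an implicational base and a candidate family $\M$, and decides whether $\M=\M(\C_\Sigma)$. So the reduction must, in polynomial time, build a new acyclic implicational base $\Sig'$ together with a candidate meet-irreducible family $\M'$, such that $\B^+$ and $\B^-$ are dual in $\L_\Sigma$ if and only if $\M'=\M(\C_{\Sigma'})$.

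First I would observe the shape of the duality condition. Equation~\eqref{eq:duality} says $\B^+=\Max_\subseteq\{F\in\C_\Sigma \mid A\not\subseteq F \text{ for any } A\in\B^-\}$. The right-hand side is exactly a set of maximal closed sets avoiding the premises in $\B^-$ --- this is structurally the same kind of object as $B^\nearrow=\Max_\subseteq\{M\in\C \mid B\not\subseteq M\}$, and more importantly these maximal avoiding sets are precisely the candidate meet-irreducibles one encounters when building $\M(\C)$. The natural idea is therefore to augment $X$ with one fresh element $z$ and to add implications forcing $z$ into the closure of each set in $\B^-$: concretely, since $\Sigma$ is unit with singleton premises, I would first convert each $A\in\B^-$ into implications that make $A\to z$ effective while keeping the base acyclic (e.g.\ route the forcing of $z$ so that $z$ is a sink in $G(\Sigma')$, which preserves acyclicity and hence keeps $\Sig'$ an acyclic convex geometry). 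In the enlarged system, the closed sets not containing $z$ correspond exactly to the closed sets of $\C_\Sigma$ that avoid all of $\B^-$, and the maximal such sets become meet-irreducible elements of $\C_{\Sigma'}$ whose unique successor is obtained by adding $z$.

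Then I would take $\M'$ to be the family consisting of $\B^+$ (viewed inside the enlarged ground set) together with whatever meet-irreducibles of $\C_{\Sigma'}$ are forced by the base itself and are independent of the $\B^-$-gadget. The key equivalence I expect is: $\B^+$ consists of all maximal $z$-free closed sets if and only if $\B^+$ equals the relevant block of $\M(\C_{\Sigma'})$, which by construction holds if and only if $\B^+$ and $\B^-$ are dual. Calling the `\textsc{CMI}` oracle on $(\Sig',\M')$ then decides duality. Because $\Sigma$ has singleton premises and we only add a controlled gadget for a single new sink element, the produced $\Sig'$ is acyclic, the construction is polynomial in $|X|+|\Sigma|+|\B^+|+|\B^-|$, and a polynomial-time `\textsc{CMI}` algorithm yields a polynomial-time `\textsc{Dual}` algorithm, as claimed.

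The main obstacle I anticipate is making the correspondence between ``maximal $z$-free closed sets of $\C_{\Sigma'}$'' and ``meet-irreducible elements of $\C_{\Sigma'}$'' exact rather than merely suggestive. Two subtleties must be controlled: first, the enlarged system may create extra meet-irreducibles coming from the gadget itself (or from $z$'s interaction with the original closed sets), and these must be accounted for in $\M'$ so that the oracle's verdict isolates precisely the duality condition and nothing else; second, I must ensure that the sets in $\B^+$ really do become meet-irreducible (each having the unique successor obtained by adding $z$) and that no maximal $z$-free closed set is spuriously lost or merged. Verifying that adding the single sink $z$ cleanly separates $\C_{\Sigma}$ into the $z$-free part (governed by $\B^-$) and its complement, while keeping the meet-irreducible structure transparent, is where the real work lies; the acyclicity and polynomiality are comparatively routine once the gadget is fixed.
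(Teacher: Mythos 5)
Your reduction is the same one the paper uses: adjoin a fresh sink $z$, add an implication $A\rightarrow z$ for each $A\in \B^-$ (which preserves acyclicity since $z$ has no outgoing arcs), and test whether $\B^+$ together with the ``other'' meet-irreducibles forms exactly $\M(\C_{\Sigma'})$. However, the two points you defer as ``where the real work lies'' are not routine bookkeeping --- they are the proof, and one of them hides the only place where the distributivity hypothesis is used. First, the composition of $\M'$ must be made exact: the paper shows that the filter $\uparrow\!\{z\}$ in the new lattice is isomorphic to $\L_\Sigma$ (via $M\mapsto M\cup\{z\}$), so the meet-irreducibles containing $z$ are precisely $\{M\cup\{z\}\mid M\in \M(\L_\Sigma)\}$, and that a $z$-free closed set which is not maximal $z$-free has at least two successors (one inside $\uparrow\!\{z\}$ and one outside), so the meet-irreducibles avoiding $z$ are exactly $z^\nearrow=\Max_\subseteq\{F\in\C_\Sigma\mid A\not\subseteq F\ \text{for any}\ A\in\B^-\}$. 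With these two facts the equivalence $\M'=\M(\C_{\Sigma'})\iff \B^+=z^\nearrow\iff$ duality is immediate; without them your ``key equivalence'' is only asserted.

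The more serious gap is polynomiality of the reduction. Your $\M'$ must contain all of $\M(\L_\Sigma)$ (lifted by $z$), so you must be able to list $\M(\L_\Sigma)$ explicitly and it must have polynomial size. Nothing in your argument uses the assumption that $\L_\Sigma$ is distributive, and indeed for a general acyclic base $\M(\L_\Sigma)$ can be exponentially large in $|X|+|\Sigma|$, in which case the instance $(\Sigma',\M')$ you hand to the \textsc{CMI} oracle is not of polynomial size and the reduction fails. The paper closes this by observing that in a distributive lattice $|\M(\L_\Sigma)|\leq |X|$ (and these elements are computable in polynomial time from the singleton-premise base). Any write-up of this theorem that never invokes distributivity cannot be complete; you should make this step explicit.
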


\begin{proof}
    Let $\I_1=(X,\Sigma,\B^+,\B^-)$ be an instance of \textsc{Dual} where the lattice $\L_\Sigma=(\C_\Sigma,\subseteq)$ is assumed to be distributive.
    Without loss of generality, $\Sigma$ can be considered acyclic in that case; see~\cite{babin2017dualization,defrain2019dualization}.
    We construct an instance $\I_2=(X\cup \{z\},\Omega,\M)$ of \textsc{CMI} as follows: $\Omega$ is the implicational base on ground set $X\cup \{z\}$ constructed from $\Sigma$ by adding an implication $A\rightarrow z$ for every $A\in \B^-$.
    Clearly the obtained implicational base is acyclic.
    Then, we put 
    \begin{equation}
        \M=\{M \cup \{z\} \mid M\in \M(\L_\Sigma)\} \cup \B^+.\label{eq:dual-hard-M}
    \end{equation}
    Observe that the left and right sides of the union in Equality~\eqref{eq:dual-hard-M} are disjoint.
    A representation of the lattice $\L_{\Omega}=(\C_{\Omega},\subseteq)$ is given in Figure~\ref{fig:disthard}.
    We shall show that $\I_1$ is a positive instance of \textsc{Dual} if and only if $\I_2$ is one of \textsc{CMI}, and that it can be decided in polynomial time in the size of $\I_1$ given a polynomial-time algorithm for \textsc{CMI}.
    
    \begin{figure}
        \centering
        \includegraphics[scale=1.05]{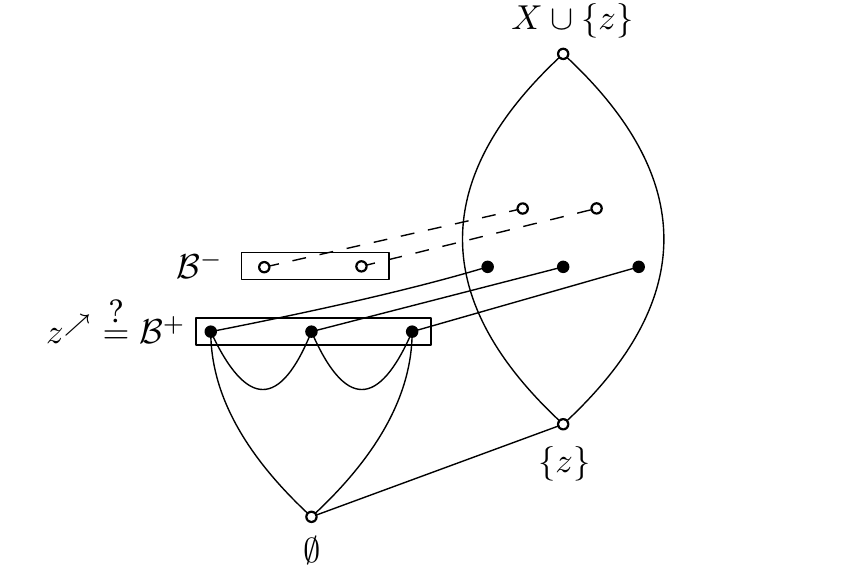}
        \caption{The construction of Theorem~\ref{thm:dual-hard}.}
        \label{fig:disthard}
    \end{figure}

    Consider the elements of $\L_{\Omega}[\uparrow\!\{z\}]$, i.e., the right part of Figure~\ref{fig:disthard}.
    Observe that $\L_\Sigma$ and $\L_{\Omega}[\uparrow\!\{z\}]$ are isomorphic.
    Consequently, there is a bijection from $\M(\L_\Sigma)$ to $\M(\L_{\Omega}[\uparrow\!\{z\}])$ given by $f: M\mapsto M\cup \{z\}$ and $f^{-1}: M'\mapsto M'\setminus \{z\}$.
    Hence
    \begin{equation}
        \{M \cup \{z\} \mid M\in \M(\L_\Sigma)\}=\M(\L_{\Omega}[\uparrow\!\{z\}]).\label{eq:dual-hard-distributive}
    \end{equation}
    Consider now the elements of $\L_{\Omega}-\!\uparrow\!\{z\}$, i.e., the left part of Figure~\ref{fig:disthard}.
    Observe that the only way for an element $N$ in such a part to be meet-irreducible is to belong to $z^\nearrow\!$, as otherwise $N$ has at least one successor in $\L_{\Omega}-\!\uparrow \{z\}$, and another one in $\L_{\Omega}[\uparrow\!\{z\}]$.
    Hence $\M(\L_{\Omega})\setminus \uparrow\!\{z\}=z^\nearrow$.
    Consequently by Equalities~\eqref{eq:dual-hard-M} and~\eqref{eq:dual-hard-distributive}, $\M=\M(\L_{\Omega})$ if and only if $\B^+=z^\nearrow$.
    As by construction, 
    \begin{equation}
        z^\nearrow=\Max_\subseteq\{F\in \C_\Sigma \mid A\not\subseteq F\ \text{for any}\ A\in \B^-\},\label{eq:dual-hard-antichain}
    \end{equation}
    we conclude that $\B^+=z^\nearrow$ if and only if $\B^+$ and $\B^-$ are dual in $\L_\Sigma$.
    Hence $\I_1$ is a positive instance of \textsc{Dual} if and only if $\I_2$ is one of \textsc{CMI}.
    Concerning the observation that $\I_1$ can be decided in polynomial time in $|\I_1|$ using a polynomial-time algorithm for \textsc{CMI}, it is a consequence of the fact that $\L_\Sigma$ being distributive, the size of $\M(\L_{\Sigma})$ is bounded by $|X|$, hence that $|\I_2|=|X\cup \{z\}|+|\Omega|+|\M|$ is bounded by a polynomial in $|\I_1|$.
    This concludes the proof.
\end{proof}

As a consequence, there is no output quasi-polynomial time algorithm for \textsc{CCM}, nor \textsc{SID}, unless there is one solving the dualization in distributive lattices in quasi-polynomial time, a long-standing open problem~\cite{babin2017dualization,defrain2019dualization}.

As for the computation of $j^\nearrow$ given an acyclic implicational base $\Sig$ and some element $j\in X$, a corollary of Equality~\eqref{eq:dual-hard-antichain} in the proof of Theorem~\ref{thm:dual-hard} is that the task is even harder than the dualization in lattices given by acyclic implicational bases (a proper superclass of distributive lattices), whenever $\Sigma$ is acyclic.
To~the best of our knowledge, no better algorithms than exponential naive ones are known on such class.
This suggests that the technique employed in~\cite{beaudou2017algorithms} and pointed by Proposition~\ref{prop:meets-reduces-to-colors} for the enumeration of meet-irreducible elements from join-irreducible elements may not be efficient in acyclic convex geometries.
However, we will show in Section~\ref{sec:mis} that it can work in a subclass of acyclic convex geometries.

\section{Ranked convex geometries}\label{sec:ranked}

Let $\Sig$ be an implicational base.
A {\em rank function} on $\Sig$ is a function $\rho:X \rightarrow \mathbb{N}$ such that if $A\rightarrow b\in \Sigma$ and $a\in A$, then $\rho(a)=\rho(b)+1$.
We say that $\Sig$ is {\em ranked} if it admits a rank function.
It~is not hard to see that ranked implicational bases are acyclic, hence that they define a subclass of acyclic convex geometries.
We say that a convex geometry is {\em ranked} if it admits a ranked implicational base. 
Observe that the construction of Theorem~\ref{thm:dual-hard} is not ranked, as not all distributive lattices admit a ranked implicational base, and in addition the premises of the constructed implications may not contain elements of a same rank.

In this section, we prove structural properties on acyclic and ranked convex geometries.
Let us first give a result from~\cite{hammer1995quasi} (see also~\cite{wild2017joy}) which plays a central role in the remaining of the section.
We recall that a set $A\subseteq X$ is a minimal generator of $b$ if it is minimal satisfying $b \in \phi(A)$.

\begin{proposition}[\cite{hammer1995quasi}]\label{prop:hammer}
    Let $(X,\phi)$ be an acyclic convex geometry.
    Then $(X,\phi)$ admits a unique irredundant implicational base of minimal generators which is unit-minimum.
    Furthermore, it can be computed in quadratic time from any unit implicational base.
\end{proposition}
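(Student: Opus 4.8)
The plan is to produce the base explicitly and then prove that every irredundant unit base of minimal generators must coincide with it. The starting point is the defining feature of a convex geometry coming from the anti-exchange property: every closed set $F$ has a \emph{unique} inclusion-minimal generating set, namely its set of extreme points $\mathrm{ex}(F)=\{x\in F\mid x\notin\phi(F\setminus\{x\})\}$, and $\mathrm{ex}(F)\subseteq A$ whenever $\phi(A)=F$. First I would record the consequence that, for each $b\in X$, the map $A\mapsto\phi(A)$ is injective on the minimal generators of $b$: if $A$ is minimal with $b\in\phi(A)$ then $A=\mathrm{ex}(\phi(A))$, since $\mathrm{ex}(\phi(A))$ already generates $b$ and is contained in $A$. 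Thus a minimal generator of $b$ is determined by its closure, and distinct ones have distinct closures. Using this I define the candidate base
\[
\Sigma^\ast=\{A\rightarrow b\mid b\in X,\ A\ \text{a minimal generator of}\ b,\ b\notin A,\ \phi(A)\ \text{is}\ \subseteq\text{-minimal among closures of minimal generators of}\ b\}.
\]
By the injectivity just noted, $\Sigma^\ast$ retains exactly one premise per minimal closure value, so it is intrinsically defined.

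The core of the argument then splits into equivalence, irredundancy, and uniqueness, proved in that order. For $\C_{\Sigma^\ast}=\C_{\phi}$ I would fix a topological order $<_\tau$ of the acyclic implication graph, so that every element of a minimal generator of $b$ precedes $b$, and show by induction on $<_\tau$ that $b\in\phi(S)$ implies $b\in\phi_{\Sigma^\ast}(S)$ for all $S$. The inductive step extracts a minimal generator $A\subseteq\phi(S)$ of $b$; its elements lie below $b$, so the induction hypothesis places them in $\phi_{\Sigma^\ast}(S)$; if $\phi(A)$ is minimal we fire $A\rightarrow b\in\Sigma^\ast$, and otherwise we pass to a minimal generator of strictly smaller closure (still inside $\phi(S)$, with elements below $b$) and recurse on the finite, strictly decreasing chain of closures. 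Acyclicity is exactly what guarantees that rebuilding a premise never requires $b$ itself.

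For irredundancy, the only conclusion-$b$ implications of $\Sigma^\ast$ whose premise lies inside $\phi(A)$ are those with closure $\subseteq\phi(A)$; two $\subseteq$-minimal closures that are comparable coincide, so $A$ is the unique candidate and deleting $A\rightarrow b$ removes $b$ from $\phi(A)$. For uniqueness, let $\Sigma'$ be any irredundant unit base of minimal generators. Every $A\rightarrow b\in\Sigma^\ast$ must lie in $\Sigma'$, since the sole minimal generator of $b$ whose closure sits inside the minimal $\phi(A)$ is $A$ itself, so no other conclusion-$b$ implication of $\Sigma'$ can fire from $A$; conversely, no implication of $\Sigma'$ may have non-minimal closure, for then a smaller-closure essential implication of $\Sigma^\ast\subseteq\Sigma'$—reachable from the premise without using $b$ by acyclicity—would make it redundant. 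Hence $\Sigma'=\Sigma^\ast$. Unit-minimality then follows formally: any equivalent unit base can be left-reduced, replacing each premise by a minimal generator of its conclusion without increasing the count, and then stripped of redundant implications; the uniqueness just proved forces the outcome to be $\Sigma^\ast$, so $|\Sigma^\ast|\le|\Sigma'|$ for every equivalent unit $\Sigma'$. The quadratic bound comes from realising both operations through forward chaining: a closure is computed in linear time, left-reducing one implication costs a bounded number of such closures, and a single closure tests each implication for redundancy, the output being $\Sigma^\ast$ independently of the processing order by uniqueness.

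I expect the \textbf{main obstacle} to be the interleaving of the two structural hypotheses. The injectivity $A\mapsto\phi(A)$ and the dichotomy ``minimal closure $=$ essential'' are pure anti-exchange phenomena, whereas the equivalence proof and the redundancy reductions rely on the acyclic order to rebuild premises without circularly invoking the conclusion. Arranging the induction so that these two ingredients cooperate cleanly—turning uniqueness into genuine order-independence of redundancy elimination, which fails for general closure systems—is the delicate point.
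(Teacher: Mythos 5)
The paper offers no proof of this proposition: it is imported as a black box from Hammer and Kogan~\cite{hammer1995quasi}, so there is no in-paper argument to compare yours against. Read on its own terms, your proof is essentially sound and self-contained. The key move --- using the anti-exchange (extreme-point) characterization to make $A\mapsto\phi(A)$ injective on the minimal generators of $b$, and then defining $\Sigma^{\ast}$ by $\subseteq$-minimality of $\phi(A)$ --- is exactly equivalent to the redundancy criterion the paper only derives \emph{afterwards} (Proposition~\ref{prop:acyclic-gen} and Corollary~\ref{cor:acyclic-gen}, whose proofs presuppose the present proposition), so your argument supplies the missing foundation rather than circularly invoking it. The equivalence induction along a topological order, the irredundancy of $\Sigma^{\ast}$, the two-sided uniqueness argument, and the reduction of unit-minimality to uniqueness via left-reduction are all correct in outline.

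Two points need tightening. First, under the paper's literal definition $\{b\}$ is itself a minimal generator of $b$, and its closure $\phi(b)$ would then be the unique $\subseteq$-minimal one, emptying $\Sigma^{\ast}$ of every useful implication; the minimality comparison must be restricted to generators with $b\notin A$, consistently with the clause you already impose on the premises themselves. Second, both the equivalence induction and the step ``a smaller-closure critical implication, reachable from the premise without using $b$, makes $D\rightarrow b$ redundant'' silently rely on the fact that \emph{any} base of minimal generators of an acyclic convex geometry is itself acyclic with respect to the same topological order (each element of a minimal generator of $y$ reaches $y$ in the implication graph of the witnessing acyclic base). This is true and easy, but it is the load-bearing lemma: without it one cannot exclude that deriving the smaller generator $C$ from $D$ first requires firing $D\rightarrow b$. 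The quadratic-time accounting is plausible (one closure per premise element for left-reduction, one per implication for redundancy elimination, with order-independence guaranteed by the uniqueness you prove), though stated loosely.
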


In the following, we call {\em critical base} the irredundant implicational base of minimal generators of an acyclic convex geometry given by Proposition~\ref{prop:hammer}.
Accordingly, we call {\em critical}\footnote
{%
    The terminology coincides with the notion of {\em critical} from~\cite{dietrich1987circuit} (the second part of Proposition~\ref{prop:acyclic-gen}) and {\em essential} from~\cite{hammer1995quasi} (minimal generators that belong to every implicational bases of minimal generators) in acyclic convex geometries.
    The critical base is also known to be \emph{optimum} after implications with the same premises are combined into a single implication; see~\cite{hammer1995quasi,adaricheva2013ordered,adaricheva2017optimum,wild2017joy} for more details on these notions.
} 
a minimal generator that belongs to this base, and {\em redundant} one that does not.
The next two propositions can be inferred from other results in matroid and antimatroid theory~\cite{dietrich1987circuit,korte2012greedoids,nakamura2013prime}.
We nevertheless reprove them here for self-containment.

\begin{proposition}\label{prop:acyclic-gen}
    Let $(X,\phi)$ be an acyclic convex geometry and $A\subseteq X$ be a minimal generator of $b\in X$. 
    Then $A$ is redundant if and only if its closure $\phi(A)$ contains another minimal generator of $b$.
\end{proposition}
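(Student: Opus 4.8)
The plan is to work throughout with the critical base $\Sigma_c$ of $(X,\phi)$ provided by Proposition~\ref{prop:hammer}, which is unit and irredundant and whose premises are exactly the critical minimal generators; thus $A$ is redundant exactly when it is a minimal generator of $b$ with $A\rightarrow b\notin\Sigma_c$. As usual I consider only minimal generators not containing their conclusion, so $b\notin A$. Write $\phi=\phi_{\Sigma_c}$, and $\phi_\Omega$ for the closure operator of a sub-base $\Omega$. For the forward implication, suppose $A$ is redundant. Since $b\in\phi(A)$ and $b\notin A$, the element $b$ is introduced in the computation of $\phi(A)$ by some implication $C\rightarrow b\in\Sigma_c$ with $C\subseteq\phi(A)$; this $C$ is a (critical) minimal generator of $b$, and $C\neq A$ because $C\rightarrow b\in\Sigma_c$ whereas $A\rightarrow b\notin\Sigma_c$. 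Hence $\phi(A)$ contains the minimal generator $C\neq A$ of $b$, as required.

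For the converse I argue the contrapositive, so assume $A$ is critical. Writing $\Sigma_c'=\Sigma_c\setminus\{A\rightarrow b\}$, irredundancy of $\Sigma_c$ is equivalent to $b\notin\phi_{\Sigma_c'}(A)$. The key tool, and the place where acyclicity is used, is a stability lemma: letting $U=\{x\in X\mid G(\Sigma_c)\text{ has no directed path from }b\text{ to }x\}$, one has $\phi(S)\cap U=\phi_{\Sigma_c'}(S)\cap U$ for every $S\subseteq X$. I would prove this by induction along a topological order of $G(\Sigma_c)$: if $q\in U$ is introduced by an implication $P\rightarrow q$, then $q\neq b$ (so $P\rightarrow q\neq A\rightarrow b$) and $P\subseteq U$ (otherwise a path from $b$ through $P$ would reach $q$), so the same derivation survives the deletion of $A\rightarrow b$. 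Note that the premise of any implication of $\Sigma_c$ with conclusion $b$ lies in $U$, by acyclicity; in particular $A\subseteq U$.

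Now let $A'$ be any minimal generator of $b$ with $A'\subseteq\phi(A)$ and $b\notin A'$, and let $C\rightarrow b\in\Sigma_c$ introduce $b$ in the computation of $\phi(A')$, so that $C\subseteq\phi(A')\subseteq\phi(A)$ and $C\subseteq U$. If $C\neq A$, the stability lemma gives $C\subseteq\phi(A)\cap U=\phi_{\Sigma_c'}(A)\cap U$, whence $b\in\phi_{\Sigma_c'}(A)$ through $C\rightarrow b\in\Sigma_c'$, contradicting that $A$ is critical. Hence $C=A$, giving $A\subseteq\phi(A')$; together with $A'\subseteq\phi(A)$ this yields $\phi(A)=\phi(A')$. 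Since $b\in\phi(A)$ and $A,A'$ are minimal generators of $b$, no proper subset of either has closure $\phi(A)$, so both are inclusion-minimal generators of the closed set $\phi(A)$. Here I invoke the convex geometry hypothesis: by the anti-exchange property every closed set of a convex geometry has a unique inclusion-minimal generating set \cite{edelman1985theory}, forcing $A'=A$. Thus no minimal generator of $b$ other than $A$ lies in $\phi(A)$, which is the contrapositive we wanted.

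The main obstacle is precisely the case $C=A$ above: the alternative generator $A'$ may reach $b$ solely through $A$, so one cannot simply reroute the derivation of $b$ after deleting $A\rightarrow b$. I expect to handle it exactly as indicated, by deducing $\phi(A)=\phi(A')$ and appealing to the uniqueness of inclusion-minimal generators in convex geometries; the supporting stability lemma, and hence the whole converse, rests essentially on acyclicity, which is what guarantees that the alternative premise $C$ can be recovered without the deleted implication.
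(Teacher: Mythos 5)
Your proof is correct, and its forward direction is exactly the paper's: since $A\rightarrow b$ is not in the critical base of Proposition~\ref{prop:hammer} while $b\in\phi(A)$, some critical implication $C\rightarrow b$ with $C\subseteq\phi(A)$ and $C\neq A$ must fire. For the converse the paper argues directly: given another minimal generator $C\subseteq\phi(A)$, it concatenates derivations of the elements of $C\setminus A$ from $A$ with a derivation of $b$ from $C$ and simply asserts that the resulting sequence avoids $A\rightarrow b$, concluding that $A\rightarrow b$ would be redundant in the critical base. You establish the same redundancy statement, but you make explicit two points the paper leaves implicit. First, your stability lemma (the closure restricted to elements not reachable from $b$ in $G(\Sigma)$ is unchanged by deleting $A\rightarrow b$) is the precise form of the paper's tacit appeal to acyclicity for the derivations of $C\setminus A$ from $A$. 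Second --- and this is a genuine addition --- you handle the case where the only critical implication producing $b$ from the alternative generator $A'$ is $A\rightarrow b$ itself, a case the paper's ``consequently'' passes over. Your resolution ($A\subseteq\phi(A')$ and $A'\subseteq\phi(A)$ give $\phi(A)=\phi(A')$, and a closed set of a convex geometry has a unique inclusion-minimal spanning set, forcing $A=A'$) is correct, and it is the one place where the anti-exchange property, rather than acyclicity alone, is invoked. So the route is essentially the paper's, but your version closes a small gap in the published argument at the cost of importing the extreme-point characterization of convex geometries.
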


\begin{proof}
    Let $A$ be a redundant minimal generator of $b$ and $\Sigma$ be the critical base of $(X,\phi)$ given by Proposition~\ref{prop:hammer}.
    Since $A \rightarrow b$ does not belong to $\Sigma$ and $b \in \phi(A)$, there exists an implication $C \rightarrow b \in \Sigma$, $A\neq C$ such that $C \subseteq \phi(A)$.
    
    Let $A$ be a minimal generator of $b$ and suppose that there exists another minimal generator $C$ of $b$ such that $C \subseteq \phi(A)$.
    Consider the critical base $\Sigma$ of $(X,\phi)$ given by Proposition~\ref{prop:hammer}.
    Since $A$ and $C$ are minimal generators, $C\setminus A$ is not empty.
    Since $C\subseteq \phi(A)$, for each $c \in C\setminus A$ there are implications in $\Sigma$ leading to $c$ from~$A$.
    Since $C$ is a minimal generator of $b$, there are implications in $\Sigma$ leading to $b$ from~$C$.
    Consequently, there exists a sequence of implications in $\Sigma$ that does not contain $A\rightarrow b$ and that nevertheless produces $b$ from $A$.
    We conclude that $A\rightarrow b$, if in $\Sigma$, is redundant, a contradiction.
    Hence $A$ is redundant.
\end{proof}

\begin{remark}\label{rem:acyclic-gen}
    By the choice of $\Sigma$ in the proof of Proposition~\ref{prop:acyclic-gen}, the minimal generator of~$b$ in Proposition~\ref{prop:acyclic-gen} can be required to be critical.
\end{remark}

A corollary of this proposition is the next characterization.

\begin{corollary}\label{cor:acyclic-gen}
    Let $(X,\phi)$ be an acyclic convex geometry and $A\subseteq X$ be a minimal generator of $b\in X$. 
    Then $A$ is redundant if and only if there exists $a \in A$ such that $\phi(A) \setminus \{a, b\}$ implies $b$.
\end{corollary}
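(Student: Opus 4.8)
The corollary characterizes redundancy of a minimal generator $A$ of $b$ via the condition: there exists $a \in A$ such that $\phi(A) \setminus \{a,b\}$ implies $b$ (i.e.\ $b \in \phi(\phi(A)\setminus\{a,b\})$). Since Proposition~\ref{prop:acyclic-gen} already gives redundancy $\iff$ $\phi(A)$ contains another minimal generator $C \neq A$ of $b$, my plan is to prove the corollary by translating that existing criterion into the new one. This reduces everything to a single biconditional: another minimal generator $C \subseteq \phi(A)$ of $b$ exists if and only if some element $a \in A$ can be removed from $\phi(A)\setminus\{b\}$ while still generating $b$.

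**Forward direction.** Suppose $A$ is redundant; by Proposition~\ref{prop:acyclic-gen} there is another minimal generator $C$ of $b$ with $C \subseteq \phi(A)$ and $C \neq A$. I want to produce an $a \in A$ with $b \in \phi(\phi(A)\setminus\{a,b\})$. The key observation is that $A \not\subseteq C$: since $A$ and $C$ are distinct minimal generators of $b$, neither contains the other (if $A \subseteq C$ then $b \in \phi(A)$ already, contradicting minimality of $C$). So pick $a \in A \setminus C$. Then $C \subseteq \phi(A) \setminus \{a\}$; moreover $b \notin C$ (as $C$ is a minimal generator of $b$, and standardness gives $b \notin \phi(C)\setminus\{b\}$ so $b \notin C$), hence $C \subseteq \phi(A)\setminus\{a,b\}$. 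By monotonicity and $b \in \phi(C)$, we get $b \in \phi(\phi(A)\setminus\{a,b\})$, as desired.

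**Reverse direction.** Suppose some $a \in A$ satisfies $b \in \phi(\phi(A)\setminus\{a,b\})$. Let $S = \phi(A)\setminus\{a,b\}$; then $b \in \phi(S)$, so $S$ contains some minimal generator $C$ of $b$ with $C \subseteq S \subseteq \phi(A)$. It remains to check $C \neq A$, so that Proposition~\ref{prop:acyclic-gen} applies and yields redundancy of $A$. This holds because $a \in A$ but $a \notin S \supseteq C$. Then Proposition~\ref{prop:acyclic-gen} gives that $A$ is redundant, completing the equivalence.

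**Main obstacle.** The routine steps are the monotonicity/closure manipulations; the one point needing care is establishing that the witnessing generator $C$ is genuinely \emph{distinct} from $A$ in both directions—handled forward by choosing $a \in A \setminus C$ (which requires $A \not\subseteq C$, from incomparability of distinct minimal generators) and backward by noting $a \notin C$. A second subtlety is confirming $b \notin A$ so that removing $b$ does not disturb $A$; this follows from standardness of the closure space ($\phi(b)\setminus\{b\}$ closed, so $b$ lies in no minimal generator of itself), and should be stated explicitly since the whole argument pivots on cleanly separating $a$, $b$, and the generator $C$ inside $\phi(A)$.
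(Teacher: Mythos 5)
Your derivation of the corollary from Proposition~\ref{prop:acyclic-gen} is correct and is exactly the route the paper intends (the paper states the corollary without proof as an immediate consequence of that proposition), with the two load-bearing points---choosing $a\in A\setminus C$ via incomparability of distinct minimal generators, and noting $a\notin C$ in the converse---handled properly. One small repair: your justification that $b\notin C$ (``standardness gives $b\notin\phi(C)\setminus\{b\}$'') is a non sequitur; instead invoke Remark~\ref{rem:acyclic-gen} to take $C$ critical, so that $C$ is the premise of an implication of the irredundant critical base and therefore cannot contain its conclusion $b$.
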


We deduce the following link between critical minimal generators and ranked implicational bases.

\begin{proposition}\label{prop:acyclic-superset}
    Let $(X,\phi)$ be an acyclic convex geometry, $A\subseteq X$ be a critical minimal generator of $b\in X$, and $\Sigma$ be any implicational base of $(X,\phi)$.
    Then there exists $C \rightarrow b \in \Sigma$ such that $A \subseteq C$.
\end{proposition}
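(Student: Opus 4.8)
The plan is to leverage the characterization of criticality supplied by Proposition~\ref{prop:acyclic-gen}. Since $A$ is a \emph{critical} (i.e.\ non-redundant) minimal generator of $b$, that proposition tells us that $\phi(A)$ contains \emph{no} minimal generator of $b$ other than $A$ itself. The whole argument will then hinge on exhibiting, inside the arbitrary base $\Sigma$, an implication whose conclusion is $b$ and whose premise is confined to $\phi(A)$: such a premise necessarily contains a minimal generator of $b$, and by criticality that generator can only be $A$. Throughout I use that $\Sigma$ is a base of $(X,\phi)$, so $\C_\Sigma=\C_\phi$ and $\phi$ is equally the closure operator induced by $\Sigma$.

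First I would record the harmless remark that $b\notin A$. Indeed, if $b\in A$ then minimality of $A$ forces $A=\{b\}$ (removing any other element of $A$ would leave a set still containing $b$, hence still generating $b$), but $\{b\}$ is trivially redundant and so is not critical; this contradicts the hypothesis on $A$.

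The key step is the claim that there exists an implication $C\rightarrow b\in\Sigma$ with $C\subseteq\phi(A)$. I would establish this by contradiction: assume that every implication of $\Sigma$ with conclusion $b$ has its premise \emph{not} contained in $\phi(A)$, and show that $\phi(A)\setminus\{b\}$ is then closed under $\Sigma$. Take any $D\rightarrow d\in\Sigma$ with $D\subseteq\phi(A)\setminus\{b\}$; since $D\subseteq\phi(A)$ and $\phi(A)$ is closed we get $d\in\phi(A)$, and moreover $d\neq b$ precisely because no implication concluding $b$ has its premise inside $\phi(A)$, so $d\in\phi(A)\setminus\{b\}$. Using $b\notin A$ from the previous paragraph, $\phi(A)\setminus\{b\}$ is thus a closed set containing $A$ but not $b$, contradicting $b\in\phi(A)$ (the least closed set containing $A$). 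This yields the desired implication $C\rightarrow b\in\Sigma$ with $C\subseteq\phi(A)$.

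To finish, from $C\rightarrow b\in\Sigma$ we have $b\in\phi(C)$, so $C$ contains some minimal generator $C'$ of $b$, with $C'\subseteq C\subseteq\phi(A)$. By Proposition~\ref{prop:acyclic-gen} and the assumption that $A$ is critical, the only minimal generator of $b$ included in $\phi(A)$ is $A$ itself; hence $C'=A$, and therefore $A\subseteq C$, as required. The main obstacle is the middle step: because $\Sigma$ is an arbitrary base, possibly redundant and with no acyclicity of its own implication-graph, we have no a priori control over which implications produce $b$; the closed-set argument circumvents this by appealing only to the closure operator $\Sigma$ induces rather than to any particular derivation of $b$. The surrounding steps are routine bookkeeping once Proposition~\ref{prop:acyclic-gen} is in hand.
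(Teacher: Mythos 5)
Your proof is correct and follows essentially the same route as the paper's: both locate an implication $C \rightarrow b \in \Sigma$ whose premise lies inside $\phi(A)$, extract from that premise a minimal generator of $b$, and invoke Proposition~\ref{prop:acyclic-gen} (criticality of $A$) to force that generator to be $A$ itself, whence $A \subseteq C$. The only differences are cosmetic: you argue directly where the paper argues by contradiction, and you spell out (via the closed-set argument on $\phi(A)\setminus\{b\}$) the existence of an implication with conclusion $b$ and premise contained in $\phi(A)$, a step the paper asserts without detail.
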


\begin{proof}
    Let $A$ be a critical minimal generator of $b$ and assume for contradiction that there is no implication $C \rightarrow b \in \Sigma$ such that $A \subseteq C$.
    In particular since $b\not\in A$ and $b\in \phi(A)$ there exists at least one implication $D \rightarrow b$ in $\Sigma$ such that $D\subseteq \phi(A)$.
    By hypothesis $A\not\subseteq D$.
    Hence by acyclicity of $(X,\phi)$ we deduce $D\subset \phi(A)$.
    Then there exists $D' \subseteq D \subset \phi(A)$ such that $D'$ is a minimal generator of $b$.
    By Proposition~\ref{prop:acyclic-gen}, it contradicts $A$ being critical.
\end{proof}

We are now ready to state the main result of this section, which is of interest as far as the computation of a unit-minimum ranked implicational base is concerned (SID).

\begin{theorem}\label{thm:irr-ranked}
    Let $(X, \phi)$ be an acyclic convex geometry.
    Then $(X, \phi)$ is ranked if and only if its critical base is ranked.
\end{theorem}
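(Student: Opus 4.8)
The plan is to prove the two implications separately, with essentially all of the work already packaged in Proposition~\ref{prop:acyclic-superset}. The backward direction is immediate: if the critical base is ranked, then by definition $(X,\phi)$ admits a ranked implicational base, namely its own critical base, and is therefore ranked. So the content lies entirely in the forward direction.

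For the forward direction I would start from a ranked implicational base $\Sigma$ of $(X,\phi)$ together with a rank function $\rho : X \rightarrow \mathbb{N}$, and argue that the \emph{same} $\rho$ is a rank function for the critical base, which I will call $\Sigma_c$. To verify this, fix an arbitrary implication $A \rightarrow b$ of $\Sigma_c$; by definition $A$ is a critical minimal generator of $b$. Since $\Sigma$ is an implicational base of the same convex geometry, Proposition~\ref{prop:acyclic-superset} yields an implication $C \rightarrow b \in \Sigma$ with $A \subseteq C$. Because $\rho$ is a rank function for $\Sigma$, every $c \in C$ satisfies $\rho(c) = \rho(b)+1$; in particular $\rho(a) = \rho(b)+1$ for every $a \in A$. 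As $A \rightarrow b$ was an arbitrary implication of $\Sigma_c$, the function $\rho$ meets the defining condition of a rank function on every implication of $\Sigma_c$, so $\Sigma_c$ is ranked.

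The only genuine ingredient is Proposition~\ref{prop:acyclic-superset}, which guarantees that each critical premise is contained in a premise of any given base with the same conclusion. Once that containment is available there is no real obstacle, since the rank condition is preserved under passing to a subset of a premise: all elements of $C$ have the common rank $\rho(b)+1$, hence so do all elements of $A\subseteq C$. The point worth emphasizing, and what makes the equivalence clean, is that no new rank function has to be constructed — the rank function of the witnessing base $\Sigma$ transfers verbatim to $\Sigma_c$. If one prefers, the same observation applied symmetrically shows that \emph{any} ranked base of $(X,\phi)$ induces the identical rank function on $X$ up to an additive constant on each connected component of the implication-graph, but this refinement is not needed for the statement.
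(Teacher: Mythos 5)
Your proof is correct and rests on the same key ingredient as the paper's, namely Proposition~\ref{prop:acyclic-superset} applied to each implication of the critical base against an arbitrary ranked base $\Sigma$. The only difference is presentational: the paper phrases the forward direction as a contradiction (no rank function on the critical implications would forbid one on $\Sigma$), whereas you transfer the rank function of $\Sigma$ directly to the critical base, which is the same argument in contrapositive form.
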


\begin{proof}
    The if part follows from the definition.
    We prove the other direction.
    Let us assume that $(X, \phi)$ is ranked and consider its critical base $\Sigma^*$ given by Proposition~\ref{prop:hammer}.
    Consider now any ranked implicational base $\Sigma$ of $(X, \phi)$.
    Then by Proposition~\ref{prop:acyclic-superset} for every implication $A \rightarrow b \in \Sigma^*$ there exists $A' \rightarrow b\in \Sigma$ such that $A\subseteq A'$.
    Hence if $\Sigma^*$ is not ranked, it must be that the elements in such $A$'s together with their conclusion $b$ do not admit a rank function in $\Sigma^*$.
    As all such elements appear as is in $\Sigma$, they cannot admit a rank function in $\Sigma$ neither, contradicting $\Sigma$ being ranked.
    Consequently $\Sigma^*$ must be ranked.
\end{proof}

We now consider the problem of deciding whether an acyclic convex geometry is ranked, from an implicational base and from its meet-irreducible elements.
We show that the first problem lies in {\sf P} while the second is in co{\sf NP}.
Whether the second problem is in {\sf P} or is co{\sf NP}-hard is left as an open problem.
As a preliminary remark, observe that the implicational base obtained by disjoint union of ranked implicational bases is ranked.
A~corollary is that every closure system obtained by product of chains is ranked, a class of interest in~\cite{elbassioni2009algorithms}.

\begin{proposition}\label{prop:test}
    Checking whether an implicational base $(X,\Sigma)$ admits a rank function~$\rho$, and computing $\rho$ if it does, can be done in polynomial time in the size of $(X,\Sigma)$.
\end{proposition}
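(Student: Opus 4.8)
The plan is to reduce the question of "does a rank function exist?" to a constraint-propagation problem on the implication-graph $G(\Sigma)$, and to observe that the constraints defining a rank function are essentially a system of difference constraints of the form $\rho(a)-\rho(b)=1$ over the arcs of $G(\Sigma)$. Concretely, recall that a rank function requires $\rho(a)=\rho(b)+1$ whenever $A\rightarrow b\in\Sigma$ and $a\in A$; so each implication $A\rightarrow b$ imposes that \emph{every} element of its premise $A$ sits exactly one level above its conclusion $b$. The first step is to translate this into a labeling problem: assign to each vertex $x\in X$ an integer $\rho(x)$ so that for every arc from $a$ to $b$ in $G(\Sigma)$ we have $\rho(a)=\rho(b)+1$.

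First I would run a propagation procedure. Start by placing all sinks of $G(\Sigma)$ (vertices with no outgoing implications, i.e.\ those never appearing in a premise) at some base level, or more robustly, process the implication-graph in a single linear-time traversal that fixes tentative values: pick any vertex, set $\rho$ of it to $0$, and use a breadth-first or depth-first exploration of the \emph{underlying undirected} implication-graph, where traversing an arc $a\to b$ forces $\rho(a)=\rho(b)+1$ and traversing it backwards forces $\rho(b)=\rho(a)-1$. Each connected component of the underlying graph is then forced up to a single additive constant, which we may fix arbitrarily; finally, since ranks must be natural numbers, shift each component so that its minimum value is $0$. This runs in time linear in $|X|+|\Sigma|$ — or more precisely in the size of $G(\Sigma)$, which is polynomial in $|(X,\Sigma)|$.

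The key step is the consistency check. A rank function exists if and only if this propagation never derives two contradictory values for the same vertex. Two mechanisms can create a contradiction: either a single implication $A\rightarrow b$ with $|A|\geq 2$ whose premise elements get forced to different levels (they must all equal $\rho(b)+1$), or a cycle in the underlying undirected implication-graph along which the accumulated $\pm1$ offsets do not cancel. Both are detected automatically by the traversal: whenever we reach an already-labeled vertex by a new arc, we simply verify that the newly forced value matches the stored one, and report failure otherwise. I would argue correctness by noting that the forced-value constraints are necessary (any valid $\rho$ must satisfy them), and that if no contradiction arises, the labeling produced by the traversal satisfies every arc constraint by construction — hence is a valid rank function after the final nonnegativity shift.

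The main obstacle, and the only real subtlety, is handling the \emph{undirected} cycles correctly: a rank function constrains arcs in both directions simultaneously (a forward arc and a backward path can meet), so one cannot simply topologically sort the acyclic digraph and assign longest-path heights, which would satisfy $\rho(a)\geq\rho(b)+1$ but not the required equality $\rho(a)=\rho(b)+1$. The difference-constraint viewpoint makes this clean: the system has a solution precisely when every cycle in the underlying graph is balanced, and this is exactly what the traversal tests. I expect the write-up to spend most of its effort making the equivalence between "consistent propagation" and "existence of $\rho$" precise, and confirming that nonnegativity can always be recovered by a per-component shift; everything else is a routine linear-time graph scan.
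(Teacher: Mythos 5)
Your proposal is correct and matches the paper's own proof: the paper likewise propagates the forced values $\rho(a)=\rho(b)+1$ over both predecessors and successors in $G(\Sigma)$ (i.e., over the underlying undirected graph), component by component, reports failure on any conflict, and handles nonnegativity by an additive offset (the paper starts the seed vertex at $n$ rather than shifting afterwards, which is equivalent). Your added remarks on undirected-cycle balance and why a topological-sort/longest-path assignment would not suffice are just a more explicit justification of the same algorithm.
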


\begin{proof}
    Let $G(\Sigma)$ be the implication-graph of $\Sigma$ that can be computed in polynomial time in the size of $(X,\Sigma)$.
    Observe that we can restrict ourselves to the case where $G(\Sigma)$ is connected, as otherwise we handle each connected component independently.
    The algorithm proceeds as follows.
    At first it picks a vertex $x$ of $G(\Sigma)$ and set $\rho(x)=n$ ($n$ is chosen to ensure that we cannot attribute a negative rank to a vertex in the next steps, hence that $\rho: X\rightarrow \mathbb{N}$).
    Then, for every unmarked vertex $x$ with a rank, the algorithm marks $x$ and extend $\rho$ to every $y\in x^-\cup x^+$, until no such vertex exists.
    Note that extending $\rho$ is deterministic by definition.
    Hence if a conflict is detected during the procedure, then we can certificate that $\Sigma$ is not ranked.
    Otherwise, a rank function is computed.
\end{proof}

\begin{proposition}\label{prop:conp}
    Deciding whether an acyclic convex geometry is ranked from its meet-irreducible elements belongs in co{\sf NP}.
\end{proposition}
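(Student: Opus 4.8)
The plan is to place the problem in co\textsf{NP} by exhibiting, for every negative instance, a polynomial-size certificate of non-rankedness that can be verified in polynomial time. The starting point is Theorem~\ref{thm:irr-ranked}: an acyclic convex geometry $(X,\phi)$ is ranked if and only if its critical base $\Sigma^*$ is ranked, so it suffices to decide whether $\Sigma^*$ admits a rank function. The difficulty, and the reason Proposition~\ref{prop:test} cannot simply be invoked, is that we are given only $\M=\M(\C_\phi)$, while $\Sigma^*$ may be exponentially larger than $\M$ (this is precisely the hard direction \textsc{SID}); we therefore cannot afford to build $\Sigma^*$ explicitly.

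First I would observe that the closure operator is available in polynomial time from $\M$, since the closed sets are exactly the intersections of meet-irreducibles: for $S\subseteq X$ one has $\phi(S)=\bigcap\{M\in\M\mid S\subseteq M\}$, with the empty intersection read as $X$. Granting access to $\phi$, a candidate implication $A\rightarrow b$ can be \emph{recognized} as a member of $\Sigma^*$ in polynomial time. One checks that $b\in\phi(A)\setminus A$ and that $A$ is a minimal generator of $b$, i.e. $b\notin\phi(A\setminus\{a\})$ for every $a\in A$; and that $A$ is critical, which by Corollary~\ref{cor:acyclic-gen} amounts to verifying $b\notin\phi(\phi(A)\setminus\{a,b\})$ for every $a\in A$. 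Each test is a polynomial number of closure evaluations.

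Next I would recast the existence of a rank function for $\Sigma^*$ as the feasibility of a system of difference constraints: each critical implication $A\rightarrow b$ with $a\in A$ imposes $\rho(a)=\rho(b)+1$. Forming the constraint graph on vertex set $X$, with an edge between $a$ and $b$ of signed weight $+1$ oriented from $a$ to $b$ for every such pair, standard potential theory says the system is feasible over $\mathbb{Z}$ — and then, by shifting each connected component, over $\mathbb{N}$ — if and only if no cycle has nonzero total signed weight. Hence $(X,\phi)$ is \emph{not} ranked if and only if there is a simple cycle $v_0,v_1,\dots,v_k=v_0$ with $k\le|X|$ along which the signed weights sum to a nonzero value, each edge justified by a critical implication. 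Such a cycle involves at most $|X|$ critical implications, each of size at most $|X|$, and is therefore a certificate of polynomial size; verifying it consists in recognizing every edge's implication as critical (polynomial, by the previous step) and summing the weights around the cycle. This shows the complement problem lies in {\sf NP}, whence the problem lies in co{\sf NP}.

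The main obstacle is the short-certificate claim in the preceding paragraph: one must argue that a negative instance always admits a \emph{bounded-length} obstruction. This is exactly where the difference-constraint viewpoint pays off, since a closed walk of nonzero total weight decomposes into simple cycles whose weights sum to it, so at least one simple cycle is unbalanced, bounding the certificate length by $|X|$; combined with the fact, supplied by Corollary~\ref{cor:acyclic-gen}, that membership in $\Sigma^*$ is locally testable from $\M$ alone, this yields the co{\sf NP} bound. Whether the problem is in fact in {\sf P}, or is co{\sf NP}-hard, would be left open.
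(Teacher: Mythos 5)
Your proposal is correct and follows essentially the same route as the paper: the certificate is a short set of critical implications forming an unbalanced cycle, each implication verified from $\M$ via closure computations and Corollary~\ref{cor:acyclic-gen}, followed by a rank-feasibility check. Your difference-constraints argument merely makes explicit the bounded-length claim that the paper asserts more briefly (``only one such cycle is needed, hence $k\le|X|$'').
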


\begin{proof}
    We describe a polynomial-size certificate that can be checked in polynomial time in order to answer negatively.
    Such a certificate is a set $\{{A_1 \rightarrow b_1},\allowbreak{}\, \dots\,,\, {A_k \rightarrow b_k}\}$ of critical implications that do not admit a rank function.
    Recall that by Theorem~\ref{thm:irr-ranked} the convex geometry is not ranked if and only if these implications exist.
    Furthermore, they must induce an undirected cycle in $G(\Sigma)$, and only one such cycle is needed to answer negatively.
    Hence $k$ is bounded by $|X|$.
    One has to check first that each of these implications is indeed a critical minimal generator, following the next three steps.
    To decide whether $A_i \rightarrow b_i$ is valid, we compute the closure of $A_i$ using meet-irreducible elements, as described in the preliminaries.
    To check that $A_i$ is a minimal generator of $b_i$, we test whether $b_i\not\in \phi(A_i\setminus \{a\})$ for all $a \in A_i$.
    As for the critical property, we check according to Corollary~\ref{cor:acyclic-gen} whether there exists $a \in A_i$ such that $\phi(A_i) \setminus \{a, b_i\}$ implies $b_i$.
    Then we launch the polynomial-time procedure given in Proposition~\ref{prop:test} to check whether these implications admit a rank function.
\end{proof}

\section{Enumerating the meet-irreducible elements in ranked ACG}\label{sec:mis}

We give an output quasi-polynomial time algorithm, based on hypergraph dualization, for the enumeration of meet-irreducible elements (CCM) in closure systems given by a ranked implicational base.
If in addition the implicational base is known to have its dimension bounded by some fixed constant, then the algorithm performs in output-polynomial time.

Let $\C_{\Sigma}$ be a closure system given by a ranked implicational base $(X,\Sigma)$ with rank function $\rho$ and closure operator~$\phi$.
Note that by Proposition~\ref{prop:hammer} and Theorem~\ref{thm:irr-ranked} we can assume $\Sigma$ to be the critical base of $(X,\phi)$.
In the following and for every $B \subseteq X$ we put
\[
    B^{\nn}\! = \Max_\subseteq \{ C \in \C_{\Sigma} \mid C\cap B=\emptyset\}.
\]
Observe that this definition generalizes the $\nearrow$ relation for singletons, as the following remark states.

\begin{remark}\label{rem:generalization}
    Let $B\subseteq X$. Then $B^{\nn}=B^\nearrow$ if $|B|=1$.
    Hence the elements in $B^{\nn}$ are meet-irreducible in that case.
\end{remark}

In the following, we say that a set $B\subseteq X$ is {\em ranked} with respect to $\rho$ if every element in $B$ shares the same rank according to $\rho$, that is, if $\rho(a)=\rho(b)$ for all $a,b\in B$.
Then to every ranked set $B$ we associate $\rho(B)$ the rank of one of its elements (or an arbitrary rank if $B$ is empty), and $\H_B$ the hypergraph defined on vertex set $V(\H_B)=\{x\in X \mid \rho(x)=\rho(B)+1\}$ and edge set
\[
    \E(\H_B)=\{A \mid A\rightarrow b\in \Sigma\ \text{for some}\ b\in B\}.
\]
Note that every vertex of $\H_B$ has the same rank.
In general, $V(\H_B)$ may be a proper superset of $\bigcup \E(\H_B)$: 
the elements that do not belong to any hyperedge of $\H_B$ are of interest in the following.
We will show that the maximal independent sets of $\H_B$ allow to define a partition of $B^{\nn}\!$.
For any $C\subseteq X$ we denote by $C_i$ the elements of $C$ with rank $i$.

\needspace{0.5in}
\begin{proposition}\label{prop:maximum-rank}
  Let $B\subseteq X$ be a ranked set of maximum rank in $\Sigma$.
  Then $B^{\nn}=\{C\}$ where $C=\{x\in X \mid \rho(x)\leq \rho(B),\ x\not\in B\}$.
\end{proposition}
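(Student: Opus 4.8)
Let me first parse what this proposition is asserting.The plan is to exploit the maximum-rank hypothesis, which collapses the description of $C$ to a simple complement, and then to check directly that this complement is closed; once that is done, maximality and uniqueness are immediate. First I would note that since $\rho(B)$ is the maximum rank attained in $\Sigma$, no element of $X$ has rank exceeding $\rho(B)$. Hence the condition $\rho(x)\le \rho(B)$ appearing in the definition of $C$ holds for every $x\in X$ and is therefore vacuous, so that $C=X\setminus B$. It thus suffices to prove that $X\setminus B$ is the unique maximal element of the family $\{C\in \C_{\Sigma}\mid C\cap B=\emptyset\}$.

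The core step is to show that $X\setminus B$ is closed. Let $A\rightarrow b\in \Sigma$. Since the closure space is standard we have $\phi(\emptyset)=\emptyset$, so no implication has an empty premise; picking any $a\in A$ gives $\rho(a)=\rho(b)+1$, whence $\rho(b)=\rho(a)-1<\rho(a)\le \rho(B)$. As every element of $B$ has rank exactly $\rho(B)$, this forces $b\notin B$, i.e.\ $b\in X\setminus B$. Consequently the defining condition ``$b\in S$ or $A\not\subseteq S$'' is satisfied for $S=X\setminus B$ — indeed $b\in X\setminus B$ holds for \emph{every} implication, regardless of the premise — so $X\setminus B\in \C_{\Sigma}$.

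It remains to conclude. The set $X\setminus B$ is disjoint from $B$ and, being the complement of $B$, contains every subset of $X$ that avoids $B$; in particular it contains every closed set disjoint from $B$. Thus $X\setminus B$ is the greatest element of $\{C\in \C_{\Sigma}\mid C\cap B=\emptyset\}$, so this family has a single maximal element and $B^{\nn}=\{X\setminus B\}=\{C\}$. I do not expect a genuine obstacle here: this is essentially the base case of the later recursive analysis, and the only points demanding care are the appeal to standardness to guarantee nonempty premises (so that every conclusion has strictly smaller rank) and the observation that the maximum-rank hypothesis renders the rank constraint in $C$ trivial. Together these yield closedness of the complement, from which maximality and uniqueness follow at once.
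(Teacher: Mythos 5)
Your proof is correct and follows essentially the same route as the paper's: both rest on the observation that, since $\rho(B)$ is the maximum rank, no implication of $\Sigma$ has its conclusion in $B$, so the complement of $B$ (which is exactly the set $C$ in the statement, the rank condition being vacuous) is closed and is therefore the unique maximal closed set disjoint from $B$. Your version merely spells out two details the paper leaves implicit, namely that $C=X\setminus B$ and that standardness rules out empty premises.
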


\begin{proof}
  As $B$ is of maximal rank, no implication in $\Sigma$ has $b \in B$ for conclusion.
  Since $\Sigma$ is acyclic, no element $x\in X$ such that $\rho(x)\leq \rho(B)$ with $x\not\in B$ can imply $b\in B$.
  The proposition follows.
\end{proof}

In the following, let us put $k=\Max\{\rho(x) \mid x\in X\}$ to be the maximum rank of $\Sigma$.
The aforementioned partition is the following.

\begin{theorem}\label{thm:partition} 
  Let ${B\subseteq X}$ be a ranked set of rank ${i<k}$, and $\H_B$ be its associated hypergraph.
  Then there is a partition of $B^{\nn}$ given by
  \[
    \big\{\{ C\in B^{\nn} \colon C_{i+1}=S\} \mid S \in MIS(\H_B)\big\}.
  \]
\end{theorem}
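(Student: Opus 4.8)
The plan is to show that the assignment $C\mapsto C_{i+1}$ is a well-defined surjection from $B^{\nn}$ onto $MIS(\H_B)$; the displayed family is then automatically a partition, since the classes are indexed by the distinct values of $C_{i+1}$ (so they are pairwise disjoint), the well-definedness step shows every $C\in B^{\nn}$ has $C_{i+1}\in MIS(\H_B)$ (covering), and the surjectivity step shows every label is attained (no empty class). The single structural fact driving everything is that, because $\Sigma$ is ranked, every implication sends a premise of rank $j+1$ to a conclusion of rank $j$; hence the closure operator only ever forces elements of strictly smaller rank than the generators triggering them. I would first record two consequences: (i) for a closed set $C$, an element $b\in B$ (of rank $i$) lies in $C$ exactly when some premise $A\rightarrow b$ satisfies $A\subseteq C_{i+1}$, i.e.\ when $C_{i+1}$ contains an edge of $\H_B$; and (ii) adjoining to a closed set a single element of rank $i+1$ and re-closing forces only elements of rank $\leq i$, so it enlarges the rank-$(i+1)$ slice by exactly that element and changes no higher rank, while closing any set all of whose elements have rank $\leq i+1$ leaves its rank-$(i+1)$ part unchanged.

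First I would prove well-definedness, namely $C_{i+1}\in MIS(\H_B)$ for every $C\in B^{\nn}$. Independence is immediate from consequence (i): if $C_{i+1}$ contained an edge $A$ with $A\rightarrow b$, $b\in B$, then closedness of $C$ would force $b\in C$, contradicting $C\cap B=\emptyset$. For maximality, suppose some rank-$(i+1)$ vertex $x\notin C_{i+1}$ kept $C_{i+1}\cup\{x\}$ independent. Then, by consequence (ii), the set $\phi(C\cup\{x\})$ is closed, strictly larger than $C$, and has rank-$(i+1)$ part exactly $C_{i+1}\cup\{x\}$, which is still independent; by consequence (i) it therefore contains no element of $B$. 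Thus it is a closed set disjoint from $B$ strictly containing $C$, contradicting $C\in B^{\nn}$. Hence $C_{i+1}$ is a maximal independent set.

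Next I would establish surjectivity. Given $S\in MIS(\H_B)$, set $D=S\cup\{x\in X\mid \rho(x)\leq i,\ x\notin B\}$ and $C_0=\phi(D)$. By consequence (ii) we have $(C_0)_{i+1}=S$, and since $S$ is independent, consequence (i) gives $C_0\cap B=\emptyset$, so $C_0$ is closed and disjoint from $B$. In the finite poset of closed sets avoiding $B$, extend $C_0$ to a maximal element $C\in B^{\nn}$. Then $C_{i+1}\supseteq S$; and if some $x\in C_{i+1}\setminus S$ existed, maximality of $S$ as an independent set would yield an edge $A\subseteq S\cup\{x\}\subseteq C_{i+1}\subseteq C$ with $A\rightarrow b$, $b\in B$, forcing $b\in C$ and contradicting $C\cap B=\emptyset$. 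Hence $C_{i+1}=S$, so the class indexed by $S$ is nonempty.

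Combining the two steps yields the partition. I expect the main obstacle to be the maximality half of the well-definedness step together with the final inflation argument of the surjectivity step: both hinge on controlling exactly what the closure does to the rank-$(i+1)$ slice when elements of other ranks are adjoined, and the clean resolution is precisely consequence (ii) — closure never manufactures a new rank-$(i+1)$ element from below — combined with the maximal independence of $S$, which converts any extraneous rank-$(i+1)$ element into a forced element of $B$. I would also handle the degenerate case $V(\H_B)=\emptyset$ separately, where $MIS(\H_B)=\{\emptyset\}$ and the partition collapses to $B^{\nn}$ itself; this is consistent, since then $C_{i+1}=\emptyset$ for every $C\in B^{\nn}$.
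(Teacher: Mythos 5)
Your proof is correct and follows essentially the same route as the paper, which splits the statement into the same two halves (every $C\in B^{\nn}$ has $C_{i+1}\in MIS(\H_B)$, proved via maximality of $C$; and every $S\in MIS(\H_B)$ is attained, proved by growing a closed set containing $S$ and avoiding $B$), both driven by the rank-control observation you isolate as consequence (ii). One small imprecision: your consequence (i), stated as a biconditional for arbitrary closed sets, fails in the ``only if'' direction for a closed set that contains an element of $B$ without containing any of its generators; however, every place you invoke that direction concerns the closure of a set disjoint from $B$, where the forced element must indeed be produced by a premise in the rank-$(i+1)$ slice, so all your applications are sound.
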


The proof of Theorem~\ref{thm:partition} is decomposed into Lemmas~\ref{lem:partition-lem1} and~\ref{lem:partition-lem2}. 
The partition is illustrated in Figure~\ref{fig:partition}.

\needspace{0.2in}
\begin{lemma} \label{lem:partition-lem1} 
    Let $B\subseteq X$ be a ranked set of rank $i<k$, and $\H_B$ be its associated hypergraph. 
    Then to every $C\in B^{\nn}$ corresponds $S\in MIS(\H_B)$ such that $C_{i+1}=S$.
\end{lemma}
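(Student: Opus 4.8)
The plan is to fix an arbitrary $C \in B^{\nn}$ and prove that $S = C_{i+1}$ is a maximal independent set of $\H_B$; this is exactly the statement, so it suffices to verify the two defining properties of a maximal independent set separately. First I would check independence. Suppose for contradiction that some edge $A \in \E(\H_B)$ satisfies $A \subseteq C_{i+1} \subseteq C$. By definition of $\H_B$ there is an implication $A \to b \in \Sigma$ with $b \in B$; since $C$ is closed and $A \subseteq C$, we obtain $b \in C$, contradicting $C \cap B = \emptyset$. Hence $C_{i+1}$ is independent in $\H_B$.

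The heart of the argument is maximality, and for this I would first isolate a \emph{rank-locality} property of the closure operator of a ranked base. Because every implication $A \to b \in \Sigma$ has all of $A$ at rank $\rho(b)+1$, forward chaining strictly decreases rank, so adding a single element $x$ of rank $i+1$ to the closed set $C$ cannot affect the higher rank levels. Concretely, I would prove by downward induction on $r$, from the top rank $k$ down to $i+2$, that $(\phi(C \cup \{x\}))_r = C_r$: a new element at rank $r$ would require a premise $A$ at rank $r+1$ contained in $(\phi(C \cup \{x\}))_{r+1}$, which is either empty (when $r=k$) or, by the induction hypothesis, equal to $C_{r+1} \subseteq C$; in the latter case $C$ closed would already force the element into $C$. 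Applying this at rank $i+1$ once more shows that no new rank-$(i+1)$ element other than $x$ can appear, so $(\phi(C \cup \{x\}))_{i+1} = C_{i+1} \cup \{x\}$.

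With this structural fact in hand, assume $C_{i+1}$ is not maximal, so that some $x \in V(\H_B) \setminus C_{i+1}$ keeps $C_{i+1} \cup \{x\}$ independent; note $x$ has rank $i+1$ and hence $x \notin C$. I would then argue that $\phi(C \cup \{x\})$ is still disjoint from $B$. Indeed, any $b \in B$ lying in this closure has rank $i$, so it is produced by some implication $A \to b \in \Sigma$ with $A \subseteq (\phi(C \cup \{x\}))_{i+1} = C_{i+1} \cup \{x\}$; but then $A \in \E(\H_B)$ would be an edge contained in the independent set $C_{i+1} \cup \{x\}$, a contradiction. Thus $\phi(C \cup \{x\})$ is a closed set that strictly contains $C$ (it contains $x$) and is disjoint from $B$, contradicting the maximality of $C$ in $B^{\nn} = \Max_\subseteq\{C' \in \C_{\Sigma} \mid C' \cap B = \emptyset\}$. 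Therefore $C_{i+1}$ is a maximal independent set and $S = C_{i+1}$ is the desired member of $MIS(\H_B)$.

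The main obstacle is establishing the rank-locality claim cleanly, since everything else reduces to short deductions from the definitions of $\H_B$, independence, and the maximality of $C$. In particular one must genuinely use the ranked hypothesis, not merely acyclicity: it is precisely the equality $\rho(a) = \rho(b)+1$ that prevents the insertion of one rank-$(i+1)$ element from cascading into other rank-$(i+1)$ elements, which is what makes the rank-$(i+1)$ slice $C_{i+1}$ the correct invariant to track and what ties each $C \in B^{\nn}$ to a single maximal independent set of $\H_B$.
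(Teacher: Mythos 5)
Your proof is correct and follows essentially the same route as the paper's: independence of $C_{i+1}$ from closedness of $C$ and $C\cap B=\emptyset$, and maximality from the maximality of $C$ combined with the observation that, $\Sigma$ being ranked, adjoining a rank-$(i{+}1)$ element $x$ can only create new elements of rank at most $i$, so any $b\in B$ entering the closure must come from an edge of $\H_B$ contained in $C_{i+1}\cup\{x\}$. Your version merely phrases the maximality step contrapositively and makes the rank-locality of the closure explicit via a downward induction, which the paper leaves implicit.
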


\begin{proof} 
    Let $C\in B^{\nn}$ and $S=C_{i+1}$.
    Observe that since $C$ is closed and as it does not intersect~$B$, $S$ is an independent set of $\H_B$.
    We show that it is maximal.
    Let $x \in V(\H_B)$ such that $x\not\in S$.
    Then $x\not\in C$ and by maximality of $C$, $C\cup \{x\}$ implies an element of~$B$.
    Since $\rho(x)=i+1$ and as $\Sigma$ is ranked, every implication $A \rightarrow y \in \Sigma$ with $x$ in its premise has $A$ of rank $\rho(S)=i+1$ and $y$ of rank $\rho(B)=i$.
    As $\rho(y)=\rho(B)$ no such $y$ belongs to the premise of an implication having $b\in B$ for conclusion.
    Hence there must be $A \rightarrow b \in \Sigma$ such that $x\in A$, $b\in B$ and consequently such that both $A \in \E(\H_B)$ and $A\subseteq S\cup \{x\}$.
    Therefore $S\cup \{x\}$ is no longer an independent set of $\H_B$. 
\end{proof}

\begin{lemma} \label{lem:partition-lem2} 
    Let $B\subseteq X$ be a ranked set of rank $i<k$, and $\H_B$ be its associated hypergraph. 
    Then to every $S\in MIS(\H_B)$ corresponds some $C\in B^{\nn}$ such that $C_{i+1}=S$.
\end{lemma}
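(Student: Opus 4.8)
The plan is to construct, for a given $S \in MIS(\H_B)$, an explicit closed set disjoint from $B$ whose rank-$(i+1)$ layer is exactly $S$, and then enlarge it to a maximal such set, which by definition lies in $B^{\nn}$. The whole argument hinges on the one-rank-at-a-time behaviour of the closure in a ranked base: since every implication $A\rightarrow b$ satisfies $\rho(a)=\rho(b)+1$ for $a\in A$, applying an implication always produces a conclusion one rank below its premise, so closures propagate strictly downward in rank.

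First I would take $D=\phi(S)$ and check three things. Because every element of $S$ has rank $i+1$ and derivations only lower the rank, no element of rank $\geq i+1$ is ever added to $\phi(S)$ beyond $S$ itself; hence $D$ contains only elements of rank $\leq i+1$ and $D_{i+1}=S$. Next, $D\cap B=\emptyset$: as $B$ has rank $i$, an element $b\in B$ could enter $D$ only through some implication $A\rightarrow b$ with $A\subseteq D$; such an $A$ has rank $i+1$, so $A\subseteq D_{i+1}=S$, making $A$ an edge of $\H_B$ contained in $S$ and contradicting the independence of $S$. Thus $D$ is a closed set, disjoint from $B$, with $D_{i+1}=S$, which establishes existence.

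Next I would extend $D$ to a set $C$ that is maximal among closed sets disjoint from $B$ and containing $D$; such a $C$ exists since $\C_\Sigma$ is finite. Any closed set disjoint from $B$ strictly containing $C$ would also contain $D$, so $C$ is in fact maximal among \emph{all} closed sets disjoint from $B$, i.e.\ $C\in B^{\nn}$. Since $D\subseteq C$ we immediately get $S=D_{i+1}\subseteq C_{i+1}$.

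The hard part will be the reverse inclusion $C_{i+1}\subseteq S$, and this is exactly where the maximality of $S$ as an independent set is used. Suppose some $x\in C_{i+1}$ lies outside $S$. Then $x\in V(\H_B)\setminus S$, and by maximality of $S$ the set $S\cup\{x\}$ contains an edge $A\in\E(\H_B)$; since $S$ is itself independent this edge must use $x$, so $A\subseteq S\cup\{x\}\subseteq C_{i+1}\subseteq C$. As $A$ is the premise of an implication $A\rightarrow b$ with $b\in B$ and $C$ is closed, we get $b\in C$, contradicting $C\cap B=\emptyset$. Hence $C_{i+1}=S$, which is the desired conclusion. I expect the only delicate points to be the verification that the closure does not leak into rank $i+1$ (where the ranked hypothesis is essential) and the bookkeeping that a set maximal within the family containing $D$ is genuinely maximal in $B^{\nn}$.
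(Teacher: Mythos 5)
Your proposal is correct and follows essentially the same route as the paper's proof: use the ranked structure to show that $\phi(S)$ adds nothing at rank $i+1$ and, by independence of $S$, meets no element of $B$; extend to a maximal closed set $C$ disjoint from $B$; and use maximality of $S$ in $MIS(\H_B)$ to force $C_{i+1}=S$. Your write-up merely makes explicit two points the paper leaves terse (that maximality over the subfamily containing $\phi(S)$ gives genuine membership in $B^{\nn}$, and the edge-based contradiction for $x\in C_{i+1}\setminus S$), which is fine.
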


\begin{proof} 
    Consider $S \in MIS(\H_B)$.
    Recall that every implication having an element $b\in B$ for conclusion has its premise in $\E(\H_B)$. 
    Furthermore since $\Sigma$ is ranked, the closure of $S$ does not contain any other elements than those of $S$ at rank $i+1$.
    Consequently as $S$ is an independent set of $\H_B$ it does not imply any element of $B$.
    Hence there exists $C \in B^{\nn}$ such that $S\subseteq C$.
    Now since $S$ is maximal, adding any $x\in V(\H_B)\setminus S$ to $S$ results into implying some $b\in B$.
    We conclude that $C_{i+1}=S$.
\end{proof}

\begin{figure}
    \center
    \includegraphics[scale=1.05]{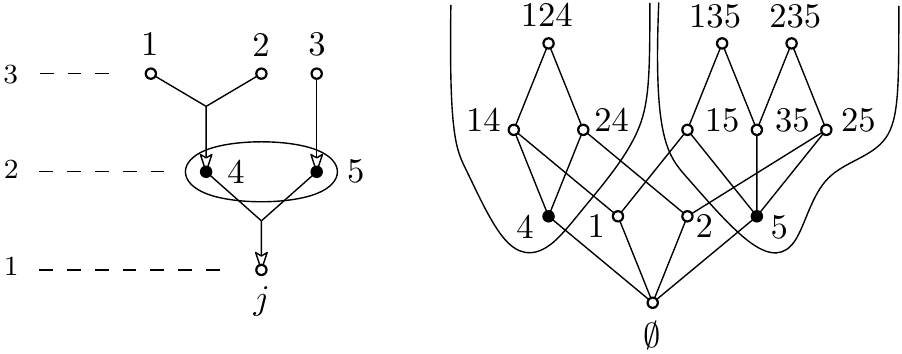}
    \caption{The situation of Theorem \ref{thm:partition}. 
    On the left a ranked implicational base $(X, \Sigma)$ and its rank function (dashed). 
    On the right the set of closed sets of $\Sigma$ not containing~$j$, ordered by inclusion. 
    Maximal independent sets of $\E(\H_{j}) = \{\{4, 5\}\}$ are $\{4\}$ and $\{5\}$.
    They partition $\smash{j^{\nn}} = j^{\nearrow} = \{\{1,2,4\}\}\uplus\{\{1,3,5\}, \{2,3,5\}\}$.}\label{fig:partition}
\end{figure}

The next lemma offers a recursive characterization of $B^{\nn}$ based on the partition defined in Theorem~\ref{thm:partition}.
Recall that since $\Sigma$ is acyclic, no element $x\in X$ such that $\rho(x)\leq \rho(B)$, $x\not\in B$ takes part in a minimal generator of an element of $B$. 
In particular, every such $x$ belongs to all $C\in B^{\nn}$.
For any ranked set $S$ we put $\hat{S}$ to be the elements of rank $\rho(S)$ not in $S$, i.e., $\hat{S}=\{x\in X \mid \rho(x)=\rho(S),\ x\not\in S\}$.

\begin{lemma} \label{lem:characterization}
    Let $B \subseteq X$ be a ranked set of rank $i<k$, and $\H_B$ be its associated hypergraph. 
    Let $S\in MIS(\H_B)$.
    Then
    \begin{equation}
        \{ C\in B^{\nn}\!\mid C_{i+1}=S\} = \{I \setminus B \mid I \in \hat{S}\,\!^{\nn}\}.\label{eq:characterization}
    \end{equation}
\end{lemma}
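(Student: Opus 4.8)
The plan is to prove the two inclusions of Equality~\eqref{eq:characterization} by exhibiting a pair of mutually inverse maps: on the right-hand side the map $I\mapsto I\setminus B$, with inverse $C\mapsto C\cup B$ on the left. Throughout I would use the structural fact recalled just before the statement, applied to two ranked sets. Applied to $B$ (rank $i$), it gives that every element of rank $\leq i$ outside $B$ lies in all $C\in B^{\nn}$; hence each such $C$ contains the whole of rank ${<}\,i$ and its rank-$i$ part is exactly the complement of $B$ at that level. Applied to the ranked set $\hat{S}$ (rank $i+1$), it gives that every element of rank $\leq i+1$ outside $\hat{S}$ lies in all $I\in \hat{S}^{\nn}$; since $\rho(B)=i\leq i+1$ and $B\cap\hat{S}=\emptyset$, each such $I$ contains $B$, contains all of rank $\leq i$, and satisfies $I_{i+1}=S$ (the rank-$(i+1)$ elements outside $\hat{S}$ being precisely those of $S$).

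I would first check that the maps land among closed sets of the right kind and are inverse to one another, which is mostly bookkeeping on ranks. For $C\in B^{\nn}$ with $C_{i+1}=S$, the set $C\cup B$ is closed because $C$ already contains all of rank ${<}\,i$, so completing the rank-$i$ level with $B$ can only trigger implications whose conclusions sit at rank $i-1$ and are therefore already present; moreover $(C\cup B)_{i+1}=S$, so $C\cup B$ avoids $\hat{S}$, and clearly $(C\cup B)\setminus B=C$. Conversely, for $I\in\hat{S}^{\nn}$ the set $I\setminus B$ trivially avoids $B$ and satisfies $(I\setminus B)_{i+1}=S$; it is closed precisely because $S$ is independent in $\H_B$, since any implication $A\rightarrow b$ with $b\in B$ and $A\subseteq I\setminus B$ would have $A\in\E(\H_B)$ and $A\subseteq (I\setminus B)_{i+1}=S$, contradicting independence. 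Thus deleting $B$ from the closed set $I$ breaks no implication.

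The main work, and the step I expect to be delicate, is showing that both maps preserve maximality, so that the images genuinely lie in $\hat{S}^{\nn}$ and $B^{\nn}$ rather than merely among closed sets avoiding $\hat{S}$ or $B$. The two arguments transfer a hypothetical strict enlargement across the maps. If $C\cup B$ failed to be maximal avoiding $\hat{S}$, witnessed by a closed $I'\supsetneq C\cup B$ with $I'_{i+1}\subseteq S$, then $I'\setminus B$ is closed by the same independence argument, still avoids $B$, and strictly contains $C$ (the extra element of $I'$ lies outside $B$), contradicting $C\in B^{\nn}$. In the other direction, if $I\setminus B$ failed to be maximal avoiding $B$, witnessed by a closed $C'\supsetneq I\setminus B$ with $C'\cap B=\emptyset$, I would first invoke maximality of $S$: were some $w\in\hat{S}$ in $C'$, then $S\cup\{w\}$ would contain an edge $A\in\E(\H_B)$ necessarily using $w$, whence $A\subseteq C'$ forces the corresponding $b\in B$ into $C'$, contradicting $C'\cap B=\emptyset$; hence $C'_{i+1}=S$, so $C'\cup B$ is closed, avoids $\hat{S}$, and strictly contains $I$, contradicting $I\in\hat{S}^{\nn}$. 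The crux throughout is that closedness must survive not only the \emph{addition} of $B$ (where saturation of the ranks ${<}\,i$ does the job) but also its \emph{removal}, which is exactly where the independence and maximality of $S$ enter.
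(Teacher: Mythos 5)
Your proof is correct and follows essentially the same route as the paper's: the same correspondence $C \mapsto C \cup B$ and $I \mapsto I \setminus B$, with closedness secured by rank saturation of the lower levels and the independence of $S$ in $\H_B$, and maximality resting on the maximality of $S$. The only cosmetic difference is that you verify maximality by transferring a hypothetical strict enlargement across the two maps, whereas the paper argues element-by-element that adjoining any $x$ forces an element of $\hat{S}$ (resp.\ of $B$); both arguments use identical facts.
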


\begin{proof}
    We show the first inclusion.
    Let $C \in B^{\nn}$ such that $C_{i+1}=S$. 
    Then $\hat{S}\cap C=\emptyset$.
    Since $\Sigma$ is acyclic, $C$ contains every $x \in X$ such that $\rho(x)\leq i$ and $x\not\in B$.
    Consequently, $I=C\cup B$ is closed.
    Since $\rho(B)=i$ and $\rho(\hat{S})=i+1$, $I$ does not imply any element of~$\hat{S}$.
    We show that $I$ is maximal with this property.
    Let $x\not\in I$.
    Because $C\in B^{\nn}$, there exists $b\in B$ such that $C \cup \{x\}$ implies $b$.
    Hence $\phi(C\cup \{x\})$ must contain an edge of $\H_B$ (i.e.,~the premise of an implication $A\rightarrow b$, $b\in B$) and so does $\phi(I\cup \{x\})$.
    Since $C_{i+1}=S$ and as by hypothesis $S$ is a maximal independent set of $\H_B$, either $x\in \hat{S}$ or $I \cup \{x\}$ implies $s$ for some $s \in \hat{S}$, proving the first inclusion.
    
    We show the other inclusion.
    Let $I \in \hat{S}\,\!^{\nn}$.
    Then $\hat{S}\cap I=\emptyset$.
    Since $\Sigma$ is ranked, $I$ contains every $x \in X$ such that $\rho(x) \leq i+1$ and $x\notin \hat{S}$.
    In particular $B,S\subseteq I$.
    Since $S$ is an independent set of $\H_B$, $C=I \setminus B$ does not imply $b$, for any $b \in B$.
    Hence it is closed and $C_{i+1}=S$.
    We show that it is maximal with this property.
    Let $x\not\in C$.
    Either $x\in I$ (and then $x\in B$) or $x \notin I$.
    Obviously, $x$ cannot be added to $C$ in the first case without intersecting $B$.
    Let us assume that $x\not\in I$.
    Then $\rho(x) \geq \rho(\hat{S})$.
    Since $I \in \hat{S}\,\!^{\nn}$ and by maximality of $I$, either $x \in \hat{S}$ or there exists $s \in \hat{S}$ such that $I \cup \{x\}$ implies $s$, hence such that $C\cup \{x\}$ implies~$s$.
    As by assumption $S$ is a maximal independent set of $\H_B$, $C\cup \{x\}$ implies some $b\in B$ in both cases, concluding the proof.
\end{proof}

In what follows given $C\subseteq X$ we put $C_{>i}=\{x\in C\mid \rho(x)>i\}$.
Since the elements on the left and right parts of Equation~\eqref{eq:characterization} only differ on $B$, and as $\rho(B)=i$, a corollary of Lemma~\ref{lem:characterization} is the following.
It is of interest as far as the proof of our algorithm is concerned.

\begin{corollary} \label{cor:characterization}
    Let ${B \subseteq X}$ be a ranked set of rank ${i<k}$, and $\H_B$ be its associated hypergraph. 
    Let $S\in MIS(\H_B)$.
    Then
    \[
        \{ C_{>i} \mid C\in B^{\nn}\!\text{\emph{ s.t.}}~C_{i+1}=S\} = \{ I_{>i} \mid I\in \hat{S}\,\!^{\nn}\}.
    \]
\end{corollary}

We now describe an algorithm which enumerates the set $\M(\C_\Sigma)$ of meet-irre\-ducible elements of $\C_\Sigma$ given $(X, \Sigma)$ whenever it is ranked.
The algorithm proceeds as follows.
For every $j\in X$, it computes $j^{\nn}=j^{\nearrow}$ recursively rank by rank relying on the partition and the characterizations of Theorem~\ref{thm:partition}, Lemma~\ref{lem:characterization} and Corollary~\ref{cor:characterization}.
Recall that since $\C_{\Sigma}$ is a convex geometry, $\{j^{\nearrow} \mid j\in X\}$ is a partition of $\M(\C_\Sigma)$, hence that every solution is obtained by such a procedure, without duplication.
The computation of $j^{\nn}$ is described in Algorithm~\ref{algo:main}.
The correctness of the whole procedure is proved in Theorem~\ref{thm:DFS-mis}.

\begin{algorithm2e}
  \SetAlgoLined
  
  $k\hspace{0.09cm}\leftarrow \Max\{\rho(x) \mid x\in X\}$\;
  
  $C\leftarrow \{x\in X \mid \rho(x)\leq \rho(B),\ x\not\in B\}$\;\label{line:preprocessing}
    
  \texttt{RecENUM}($\Sigma,B,C$)\;
  
  \vspace{0.2cm}

  \SetKwProg{myproc}{Procedure}{}{}
  \myproc{{\em \texttt{RecENUM}($\Sigma,B,C$)}\label{line:receive}}{
        {\bf if} $\rho(B)=k$ {\em\bf then output} $C$ {\em\bf and return}\;\label{line:output}
        
        \For{{\em\bf all} $S\in MIS(\H_B)$\label{line:forall}}
        {
            % $C\leftarrow C\cup S$\;\label{line:M-MS}
            
            $\hat{S}\leftarrow \{x\in X \mid \rho(x)=\rho(S),\ x\not\in S\}$\;\label{line:chapeau}
            
            \texttt{RecENUM}($\Sigma, \hat{S}, C\cup S$)\;\label{line:rec}
        }
  }

  \caption{A recursive algorithm enumerating the set $B^{\nn}$ given a ranked implicational base $\Sig$ and a ranked set $B\subseteq X$.}\label{algo:main}
\end{algorithm2e}

\begin{theorem}\label{thm:DFS-mis}
    Let $f,g:\mathbb{N}\rightarrow\mathbb{N}$ be two functions and $\C_\Sigma$ be the closure system of a given ranked implicational base $\Sig$.
    Then there is an algorithm enumerating the set $\M(\C_\Sigma)$ of meet-irreducible elements of $\C_\Sigma$ with delay
    \[
        O(|X| \cdot f(|X|+|\Sigma|)),
    \]
    $O(|X|\cdot g(|X|+|\Sigma|))$ space, and after $O(|X|\cdot|\Sigma|)$ preprocessing time whenever there is one enumerating $MIS(\H)$ with delay $f(|\H|)$ and space $g(|\H|)$ given \hbox{a hypergraph $\H$.}
\end{theorem}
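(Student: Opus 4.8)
The plan is to establish, by induction on $k-\rho(B)$, a single invariant that captures what the recursion computes: for any ranked set $B$ of rank $i$ and any $C\subseteq X$ containing no element of rank $>i$, the call \texttt{RecENUM}($\Sigma,B,C$) outputs exactly the family $\{C\cup D_{>i}\mid D\in B^{\nn}\}$, each set exactly once. Correctness of the full enumeration will then follow by specializing the initial parameters and invoking the partition $\{j^\nearrow\mid j\in X\}$ of $\M(\C_\Sigma)$ valid in a convex geometry. Throughout I would use that $\Sigma$ may be taken to be the critical base (Proposition~\ref{prop:hammer} and Theorem~\ref{thm:irr-ranked}), so that the hypergraphs $\H_B$ and the partition statements apply.

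For the base case $\rho(B)=k$, Proposition~\ref{prop:maximum-rank} gives $B^{\nn}=\{C^*\}$ with $C^*$ supported on ranks $\le k$, hence $\{C\cup D_{>k}\mid D\in B^{\nn}\}=\{C\}$, which is precisely what line~\ref{line:output} outputs. For the inductive step I would fix $S\in MIS(\H_B)$ and apply the induction hypothesis to the recursive call on line~\ref{line:rec}, namely \texttt{RecENUM}($\Sigma,\hat S,C\cup S$); this is legitimate since $\rho(\hat S)=i+1$ and $C\cup S$ carries no element of rank $>i+1$. Each $I\in\hat S^{\nn}$ satisfies $I_{i+1}=S$, because every element of rank $\le i+1$ outside $\hat S$ belongs to every member of $\hat S^{\nn}$ (the property recalled just before Lemma~\ref{lem:characterization}); consequently $(C\cup S)\cup I_{>i+1}=C\cup I_{>i}$, and the recursive call produces $\{C\cup I_{>i}\mid I\in\hat S^{\nn}\}$. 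By Corollary~\ref{cor:characterization} this is exactly $\{C\cup D_{>i}\mid D\in B^{\nn},\,D_{i+1}=S\}$. Taking the union over $S\in MIS(\H_B)$ and using the partition of $B^{\nn}$ from Theorem~\ref{thm:partition} yields $\{C\cup D_{>i}\mid D\in B^{\nn}\}$, as required. No set is produced twice: all members of $B^{\nn}$ share the same rank-$\le i$ part, so $D\mapsto C\cup D_{>i}$ is injective on $B^{\nn}$, while distinct choices of $S$ index disjoint blocks of the partition.

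To conclude correctness I would run the procedure with $B=\{j\}$ and $C_0=\{x\in X\mid \rho(x)\le\rho(j),\,x\ne j\}$, exactly the value set on line~\ref{line:preprocessing}. Since every $D\in\{j\}^{\nn}=j^\nearrow$ (Remark~\ref{rem:generalization}) contains all elements of rank $\le\rho(j)$ except $j$, its rank-$\le\rho(j)$ part equals $C_0$, so $C_0\cup D_{>\rho(j)}=D$ and the call outputs precisely $j^\nearrow$. Running this once for each join-irreducible element, with $B=\{j\}$ for a representative $j$ of each distinct $\phi(j)$, and concatenating the outputs then enumerates $\M(\C_\Sigma)$ without repetition, because $\{j^\nearrow\mid j\in X\}$ partitions $\M(\C_\Sigma)$ in a convex geometry.

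For the resources, I would first compute $\rho$ in polynomial time (Proposition~\ref{prop:test}) and index $\Sigma$ by conclusion so that any $\H_B$ is built in time $O(|\Sigma|)$; this accounts for the $O(|X|\cdot|\Sigma|)$ preprocessing. The recursion tree has depth at most $k\le|X|$ and branches at each node along the solutions of one $MIS(\H_B)$ instance with $|\H_B|\le|X|+|\Sigma|$. Viewing the enumeration as a depth-first traversal whose leaves are the outputs, between two consecutive outputs one backtracks and descends through $O(|X|)$ nodes, each advancing an $MIS$-oracle by one solution at cost $f(|X|+|\Sigma|)$ (the per-node bookkeeping of $\hat S$, $C\cup S$ and $\H_B$ being absorbed), giving delay $O(|X|\cdot f(|X|+|\Sigma|))$; the stack holds one oracle state per level, whence $O(|X|\cdot g(|X|+|\Sigma|))$ space. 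The step I expect to be the main obstacle, and would treat most carefully, is the inductive identity $(C\cup S)\cup I_{>i+1}=C\cup I_{>i}$ combined with the exact application of Corollary~\ref{cor:characterization} and Theorem~\ref{thm:partition}: these are what make the rank-by-rank recursion collapse onto $B^{\nn}$ with neither loss nor duplication, whereas the delay and space analysis of the nested oracle calls is routine once the tree structure is fixed.
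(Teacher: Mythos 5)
Your proposal is correct and follows essentially the same route as the paper's proof: a rank-by-rank induction on the recursion tree of Algorithm~\ref{algo:main} establishing that \texttt{RecENUM}$(\Sigma,B,C)$ outputs $\{C\cup D_{>i}\mid D\in B^{\nn}\}$, with Proposition~\ref{prop:maximum-rank} as the base case, Theorem~\ref{thm:partition} and Corollary~\ref{cor:characterization} (via the identity $(C\cup S)\cup I_{>i+1}=C\cup I_{>i}$) for the inductive step, the partition $\{j^\nearrow\mid j\in X\}$ to assemble $\M(\C_\Sigma)$, and the same depth-$|X|$ accounting for the delay, space, and preprocessing bounds.
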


\begin{proof} 
    Recall that by Proposition~\ref{prop:meets-reduces-to-colors}, there is an algorithm enumerating $\M(\C_\Sigma)$ with delay at most $2\cdot f(|X|+|\Sigma|)$ whenever there is one enumerating $j^\nearrow$ with delay
    $f(|X|+|\Sigma|)$ for any $j\in X$.
    Hence, it is sufficient to show that there is an algorithm enumerating $j^{\nn}=j^\nearrow$ with $O(|X| \cdot f(|X|+|\Sigma|))$ delay, $O(|X| \cdot g(|X|+|\Sigma|))$ space and after $O(|X|\cdot |\Sigma|)$ preprocessing time to prove the theorem.
    We shall show that Algorithm~\ref{algo:main} correctly outputs this set and that it performs within these time and space bounds. 
    
    We show correctness.
    Assume without loss of generality that $\Min\{\rho(x) \mid x\in X\}=0$ and let us put $k=\Max\{\rho(x) \mid x\in X\}$.
    Let $T[B,C]$ denote the recursive execution-tree of \texttt{RecENUM}$(\Sigma,B,C)$ for $\Sigma$, a ranked set $B$ and $C \subseteq X$.
    We show by induction on the rank $i$ of $B$ and for every $C\subseteq X$ that the set $\{C\cup C_{>i}^* \mid C^*\in B^{\nn}\}$ is output at the leaves of $T[B,C]$.
    Note that we aim here to prove both implications at the same time.
    Let us first consider the case $i=k$.
    By Proposition~\ref{prop:maximum-rank}, $B^{\nn}=\{C^*\}$ where $C^*=\{x\in X \mid \rho(x)\leq \rho(B),\ x\not\in B\}$ and $C_{>i}^*=\emptyset$ in that case.
    Also $C$ is output Line~\ref{line:output} of the algorithm.
    Hence the property holds for $i=k$, the tree $T[B,C]$ being reduced to a leaf.
    Let us assume that the property holds for every rank greater than~$i$.
    We show that it holds for $i$.
    Consider $B$ of rank~$i$.
    Recall that by Theorem~\ref{thm:partition}, $\{\{C^*\in B^{\nn} \colon C^*\!\!_{i+1}=S\} \mid S \in MIS(\H_B)\}$ is a partition of $B^{\nn}$, and furthermore by Corollary~\ref{cor:characterization}, 
    \begin{align*}
        \{ C_{>i}^* \mid C^*\in B^{\nn} \!\text{ s.t.}~C_{i+1}^*=S\} &= \{\,\,\, I_{>i}\hspace{0.168cm} \mid I\in \hat{S}\,\!^{\nn}\}.\hspace{0.34cm}
    \end{align*}
    By inductive hypothesis since $\rho(\hat{S})=i+1$, the set $\{C\cup I_{>i+1} \mid I\in \hat{S}\,\!^{\nn}\}$ is 
    output by a call to \texttt{RecENUM}($\Sigma, \hat{S}, C$) for every $C\subseteq X$.
    Since $I_{i+1}=S$ whenever $I\in \hat{S}\,\!^{\nn}\!\!$, 
    \begin{align*}
        \{C \cup C^*\!\!_{>i} \mid C^*\in B^{\nn}\!\text{ s.t.}~C^*\!\!_{i+1}=S\}&=\{C\cup S\cup I_{>i+1}\mid I\in \hat{S}\,\!^{\nn}\}\\
        &=\{\hspace{0.536cm}C\cup I_{>i}\hspace{0.593cm}\mid I\in \hat{S}\,\!^{\nn}\}
    \end{align*}
    for every $C\subseteq X$.
    Hence by Theorem~\ref{thm:partition} the set $\{C\cup C_{>i}^* \mid C^*\in B^{\nn}\}$ is output by a call to \texttt{RecENUM}($\Sigma, \hat{S}, C\cup S$) for every $S\in MIS(\H_B)$ Lines~\ref{line:forall},~\ref{line:chapeau} and~\ref{line:rec}.
    Consequently it is obtained at the leaves of $T[B,C]$.
    This concludes the induction.
    Now since every $C^*\in B^{\nn}$ intersects the $\rho(B)$ first ranks on $C=\{x\in X \mid \rho(x)\leq \rho(B),\ x\not\in B\}$, the set $B^{\nn}$ is output by an initial call to \texttt{RecENUM}($\Sigma, B, C$).
    
    Let us now consider the complexity of the algorithm.
    Observe that a rank function of $(X,\Sigma)$ and an ordered sequence of the elements of $X$ according to their rank can be computed in $O(|X|\cdot|\Sigma|)$ preprocessing time and $O(|X|^2)$ space, so that the computations of $C$ and $\hat{S}$ Lines~\ref{line:preprocessing} and~\ref{line:chapeau} can be achieved in $O(|X|)$ time.
    As for the computation of $\H_B$, it can be achieved in $O(|X|+|\Sigma|)$ time and space by indexing implications of $\Sigma$. 
    By assumption since $|\H|\leq |X|+|\Sigma|$, every $S\in MIS(\H_B)$ can be enumerated with $f(|X|+|\Sigma|)$ delay and using $g(|X|+|\Sigma|)$ total space.
    Assuming that $f,g\in \Omega(|X|+|\Sigma|)$, the computation of $\H_B$ Line~\ref{line:forall}, $\hat{S}$ Line~\ref{line:chapeau} is meaningless in term of time and space compared to that of $S\in MIS(\H_B)$.
    This also holds for the space needed by the preprocessing steps.
    Now, since the depth of the recursive tree is bounded by $|X|$, the time and space complexities follow.
\end{proof}

A trivial bound of $f(|\H|+|MIS(\H)|)$ on the delay of an output-poly\-nomial time algorithm enumerating $MIS(\H)$ within the same time yields an output quasi-polynomial time algorithm enumerating meet-irreducible elements in closure systems given by ranked implicational bases, using the algorithm of Fredman and Khachiyan \cite{fredman1996complexity} as a subroutine for the enumeration of $MIS(\H)$.
If moreover the dimension of $\Sigma$ is known to be bounded by some fixed constant, then the described algorithm runs in output-polynomial time using the algorithm of Eiter and Gottlob \cite{eiter1995identifying}.
It performs with polynomial delay whenever $\Sigma$ is of dimension two using the algorithm of Tsukiyama~\cite{tsukiyama1977new} on the enumeration of maximal independent sets in graphs.

It is worth mentioning that the partition given by Theorem~\ref{thm:partition} holds for more general closure spaces than ranked convex geometries. 
One such convex geometry is given in Example~\ref{ex:open-pbs}.
Therefore, the question of Open problem~\ref{op:meet} naturally arises.

\begin{example}\label{ex:open-pbs}
    We consider the implicational base $\Sigma = \{ 1 \rightarrow 2, 1 \rightarrow 3,\allowbreak{} {2 \rightarrow 4},\allowbreak{} {34 \rightarrow 5} \}$. 
    Despite the fact that $\Sigma$ is not ranked, Theorem~\ref{thm:partition} holds.
    Hence, an algorithm in the fashion of Algorithm~\ref{algo:main} correctly outputs the meet-irreducible elements in that case.
\end{example}

\begin{problem}\label{op:meet}
    For which closure spaces does Theorem~\ref{thm:partition} hold?
\end{problem}

Recall however that by Theorem~\ref{thm:dual-hard} and Equality~\eqref{eq:duality} the strategy of enumerating $j^\nearrow$ for every $j\in X$ in acyclic convex geometries is harder than the dualization in lattices given by acyclic implicational bases. 

\bigskip

\section{Constructing the ranked implicational base}\label{sec:tr}

In this section, we give an output quasi-polynomial time algorithm constructing the critical base of a ranked convex geometry given by the set of its meet-irreducible elements (SID).
Recall by Theorem~\ref{thm:irr-ranked} that such an implicational base is unit-minimum.

In what follows, let $(X,\phi)$ be a ranked convex geometry, $(X,\C_\phi)$ be its closure system and $\M=\M(\C_\phi)$ be the set of its meet-irreducible elements.
Let $(X,\Sigma)$ denote the critical base we are willing to compute.
Then for every $j \in X$ we put
\[ 
    \pred(j) = \{ a \in X \mid \exists A \rightarrow j \in \Sigma, \ a \in A \}. 
\]
In other words, $\pred(j)$ is the set of elements of $X$ belonging to some critical minimal generator of $j$.
Recall that since $\Sigma$ is ranked, there is no minimal generator $A$ of $b$ such that $a\in A$ and $a,b\in \pred(j)$.
Then, to $j \in X$ we associate the hypergraph $\H_j$ defined on vertex set $V(\H_j)=X \setminus \bigcap_{M \in j^{\nearrow}} M$ and edge set
\[
    \E(\H_j) = \{ X\setminus M \mid M \in j^{\nearrow} \}.
\]
It is well known that the minimal transversals of this hypergraph are exactly the minimal generators of $j$.
See for instance~\cite{mannila1992design, wild1994theory, bertet2018lattices,habib2018representation}.
In the following we show that we can restrict this hypergraph so that its minimal transversals are exactly the critical minimal generators of $j$. 
The next lemma provides a characterization of $\pred(j)$ that depends on $\M$.

\begin{lemma}\label{lem:pred}
    Let $j \in X$, $M \in j^{\nearrow}$ and $a \notin M$.
    Then $a \in pred(j)$ if and only if $M \cup \{a, j\}$ is closed.
\end{lemma}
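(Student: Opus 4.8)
The plan is to prove the two implications separately, leaning on two structural consequences of $M$ being a maximal closed set avoiding $j$, together with the rank function $\rho$. We may assume $a\neq j$. The first fact is that $M\cup\{j\}$ is itself closed: since $\Sigma$ is ranked, adding $j$ (of rank $\rho(j)$) to the closed set $M$ can only trigger implications whose conclusions have rank strictly below $\rho(j)$; were such a conclusion $c\notin M$ produced, then $\phi(M\cup\{c\})$ would be a closed set strictly larger than $M$ still avoiding $j$ (from a rank-$\rho(j)$ element one only reaches ranks below $\rho(c)<\rho(j)$, so never $j$), contradicting the maximality of $M$ in $j^\nearrow$. The same maximality argument, applied to an arbitrary $a\notin M$, shows $j\in\phi(M\cup\{a\})$: otherwise $\phi(M\cup\{a\})$ would be a closed proper superset of $M$ avoiding $j$.

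For the forward implication, assume $a\in\pred(j)$. Since $\pred(j)$ collects the premise elements of the critical implications $A\rightarrow j\in\Sigma$ and $\Sigma$ is ranked, necessarily $\rho(a)=\rho(j)+1$. I then add $a$ to the closed set $M\cup\{j\}$ and verify that nothing new is forced beyond $M\cup\{a,j\}$. Any implication that fires must use $a$ in its premise, hence has a premise of rank $\rho(j)+1$ contained in $M\cup\{a\}$ and a conclusion $c$ of rank $\rho(j)$. If such a $c$ were distinct from $j$ and outside $M$, the first structural fact applied to the rank-$\rho(j)$ element $c$ would again contradict the maximality of $M$; hence every such $c$ lies in $M\cup\{j\}$, so $M\cup\{a,j\}$ is closed.

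For the converse, assume $M\cup\{a,j\}$ is closed. Combined with $j\in\phi(M\cup\{a\})$ this forces $\phi(M\cup\{a\})=M\cup\{a,j\}$. As $M$ does not generate $j$ (it is closed and $j\notin M$) while $M\cup\{a\}$ does, there is a minimal generator $A$ of $j$ with $A\subseteq M\cup\{a\}$ and $A\not\subseteq M$, so $a\in A$ and $\phi(A)\subseteq M\cup\{a,j\}$. If $A$ is critical we are done; otherwise Proposition~\ref{prop:acyclic-gen} together with Remark~\ref{rem:acyclic-gen} yields a critical minimal generator $A'$ of $j$ with $A'\subseteq\phi(A)\subseteq M\cup\{a,j\}$. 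Since $A'$ excludes its conclusion $j$ and cannot be contained in $M$, it must contain $a$, whence $a\in\pred(j)$.

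The crux of the argument, and the one place where rankedness is essential, is the control of the closure operator: the rank function guarantees that each implication lowers rank by exactly one, so adding an element only propagates strictly downward; this is precisely what lets the maximality of $M$ forbid spurious conclusions and pin the single newly generated element down to $j$ itself. The more delicate step is the converse, where an arbitrary minimal generator through $a$ must be upgraded to a \emph{critical} one without leaving $M\cup\{a\}$; securing that a critical generator still contains $a$ is exactly the role of Remark~\ref{rem:acyclic-gen}.
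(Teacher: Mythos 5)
Your proof is correct and follows essentially the same route as the paper's: the forward direction uses rankedness plus the maximality of $M$ in $j^\nearrow$ to show that every newly triggered conclusion of rank at most $\rho(j)$ other than $j$ already lies in $M$, and the converse extracts a minimal generator of $j$ inside $M\cup\{a\}$ and upgrades it to a critical one via Proposition~\ref{prop:acyclic-gen} and Remark~\ref{rem:acyclic-gen}. Your version merely makes explicit (as the intermediate fact that $M\cup\{j\}$ is closed) what the paper states as the blanket observation that every $x\neq j$ with $\rho(x)\leq\rho(j)$ belongs to $M$.
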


\begin{proof}
    We prove the first implication.
    Consider $j \in X$, $M \in j^{\nearrow}$ and $a \notin M$.
    Note that one such $M$ exists as by previous remarks $a$ belongs to at least one minimal transversal of $\H_j$.
    Let us assume that $a \in \pred(j)$.
    Since $\Sigma$ is ranked, for any $x$ such that $\rho(x) \leq \rho(j)$, $x\neq j$ we have $x \in M$.
    This holds in particular for every element $y\in X$, $y\neq j$ such that there is a minimal generator $A$ of $y$ with $a \in A$ as $\rho(y) < \rho(a)$ and $\rho(a) = \rho(j) + 1$ in that case.
    Consequently, $M \cup \{a, j\}$ is closed.
    
    We prove the other implication.
    Let $j\in X$, $M\in j^\nearrow$ and $a\notin M$ such that $M \cup \{a, j\}$ is closed.
    By definition $M$ is closed, it does not contain $j$, and it is maximal with this property.
    In particular $M \cup \{a\}$ implies $j$, that is, $j \in \phi(M \cup \{a\})$.
    By Proposition~\ref{prop:acyclic-gen} and Remark~\ref{rem:acyclic-gen}, there exists a critical minimal generator $E$ of $j$ such that $E \not\subseteq M$ and $E \subseteq \phi(M \cup\{a\}) = M \cup \{a, j\}$ as $M \cup \{a, j\}$ is closed by assumption.
    Then $a \in E$ and consequently $a \in \pred(j)$.
\end{proof}

Observe that the second part of the proof holds in the more general context of acyclic convex geometries.
We are now ready to characterize the critical base of a ranked convex geometry.
Given a hypergraph $\H$ and $S\subseteq V(\H)$ we note $\H[S]$ the {\em hypergraph induced} by $S$ defined by $V(\H[S])=S$ and $\E(\H[S])=\{E\cap S \mid E\in \H\}$.

\begin{theorem}\label{thm:mintr-characterization}
    Let $(X,\Sigma)$ be the critical base of a ranked convex geometry.
    Let $j\in X$.
    Then $A$ is a critical minimal generator of $j$ if and only if $A\in Tr(\H_j[\pred(j)])$.
\end{theorem}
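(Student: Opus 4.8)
The plan is to combine the identity $Tr(\H_j) = \{A \subseteq X \mid A \text{ is a minimal generator of } j\}$ recalled above with two observations: first, that restricting $\H_j$ to $\pred(j)$ selects exactly those minimal generators that are contained in $\pred(j)$; and second, that among minimal generators of $j$, being contained in $\pred(j)$ is equivalent to being critical. Writing $i = \rho(j)$, I would record at the outset that, $\Sigma$ being ranked, every critical minimal generator of $j$ lies entirely at rank $i+1$, so that $\pred(j) \subseteq \{x \in X \mid \rho(x) = i+1\}$.

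For the first observation I would isolate the purely combinatorial fact that for any hypergraph $\H$ and any $S \subseteq V(\H)$ one has $Tr(\H[S]) = \{T \in Tr(\H) \mid T \subseteq S\}$. Indeed, a set $T \subseteq S$ meets every edge $E \cap S$ of $\H[S]$ if and only if it meets every edge $E$ of $\H$, since $T \cap (E \cap S) = T \cap E$; hence the transversals of $\H[S]$ are precisely the transversals of $\H$ contained in $S$, and minimality transfers because every proper subset of such a $T$ is again contained in $S$. Applying this with $S = \pred(j)$ and invoking $Tr(\H_j) = $ minimal generators of $j$, I obtain that $Tr(\H_j[\pred(j)])$ is exactly the set of minimal generators of $j$ contained in $\pred(j)$.

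It then remains to show that a minimal generator $A$ of $j$ is critical if and only if $A \subseteq \pred(j)$. The forward direction is immediate from the definition of $\pred(j)$. For the converse, suppose $A \subseteq \pred(j)$ is a minimal generator of $j$ that is \emph{redundant}. By Proposition~\ref{prop:acyclic-gen} together with Remark~\ref{rem:acyclic-gen}, $\phi(A)$ then contains a critical minimal generator $C \neq A$ of $j$, and since $C$ is critical we have $C \subseteq \pred(j)$, so every element of $C$ sits at rank $i+1$. Here the ranked structure enters decisively: as $A$ consists only of rank-$(i+1)$ elements and every implication of $\Sigma$ has its conclusion one rank below its premise, the closure $\phi(A)$ acquires no new element of rank $i+1$, that is, $\phi(A)_{i+1} = A$. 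Because $C \subseteq \phi(A)$ and $C$ lies entirely at rank $i+1$, this forces $C \subseteq A$, whence $C = A$ by minimality of $A$, contradicting $C \neq A$. Thus $A$ is critical, the two characterizations coincide, and the theorem follows.

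The main obstacle is this last step ruling out redundancy: everything hinges on the rank-preservation identity $\phi(A)_{i+1} = A$, which is exactly the point at which the ranked hypothesis (rather than mere acyclicity) is needed. The other ingredients — the transfer of minimal transversals under hypergraph restriction and the already-recalled bijection between $Tr(\H_j)$ and minimal generators — are routine.
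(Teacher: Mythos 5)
Your proposal is correct and follows essentially the same route as the paper: both arguments reduce the theorem to showing that a minimal transversal of $\H_j[\pred(j)]$ is a minimal generator of $j$ contained in $\pred(j)$, and that any such generator must be critical. The only difference is in how the final contradiction is extracted — where the paper invokes Corollary~\ref{cor:acyclic-gen} and tracks an element $e$ of the critical generator whose derivation requires $a$, forcing $\rho(e)\neq\rho(a)$, you observe directly that $\phi(A)$ acquires no new rank-$(i+1)$ elements so the critical generator $C\subseteq\phi(A)$ must equal $A$ — an equally valid and arguably cleaner way to reach the same conclusion.
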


\begin{proof}
    We prove the first implication.
    Since $A$ is a critical minimal generator of $j$, it is a minimal transversal of $\H_j$, hence of $\H_j[\pred(j)]$ as $A\subseteq \pred(j)$ by definition.
    
    We prove the other implication.
    Let $A$ be a minimal transversal of $\H_j[\pred(j)]$.
    Note that $E\cap \pred(j)\neq\emptyset$ for all $E\in \E(\H_j)$, as $\pred(j)$ is a transversal of $\H_j$.
    Hence $A$ is a minimal transversal of $\H_j$.
    Assume toward a contradiction that $A$ is a redundant minimal generator of $j$.
    Then by Corollary~\ref{cor:acyclic-gen} there exists $a \in A$ such that $\phi(A) \setminus \{a,j\}$ implies~$j$.
    That is, there is at least one critical minimal generator $E$ of $j$, $a \notin E$ and $E \subseteq \phi(A) \setminus \{a\} \subseteq \phi(A)$.
    In particular $\rho(E)=\rho(j)+1$.
    Furthermore since $A$ is minimal generator of $j$, $E \not\subseteq \phi(A \setminus \{a\})$.
    Consequently, there is an element $e \in E$ which is implied by some $A' \subset A$ such that $a \in A'$.
    Hence $a$ and $e$ must have two different ranks.
    However since $a$ belongs to $\pred(j)$, it belongs to a critical minimal generator of $j$ and has rank $\rho(j)+1$.
    This contradicts $\rho(a)\neq \rho(e)$, and the theorem follows.
\end{proof}

We are now ready to describe an algorithm which enumerates the critical base of a ranked convex geometry given by the set of its meet-irreducible elements.
For every $j\in X$, it computes the set $\pred(j)$ of elements belonging to some critical minimal generator of~$j$, and the complementary hypergraph $\H_j$ of meet-irreducible elements in $\nearrow$ relation with~$j$.
Then according to Theorem~\ref{thm:mintr-characterization} it enumerates every minimal transversal of $\H_j[\pred(j)]$.
A trace of this algorithm is given in Example~\ref{ex:mintr-meet}.
Its complexity analysis is given in Theorem~\ref{thm:DFS-tr}.

\begin{example}\label{ex:mintr-meet}
    Consider the implicational base $\Sigma$ of Figure~\ref{fig:partition}.
    Note that $j^{\nearrow} = \{\{1, 2, 4\}$, $\{1, 3, 5\}, \{2, 3, 5\}\}$.
    We aim to compute the critical minimal generators of $j$.
    Here, $\E(\H_j) = \{\{3, 5\}, \{2, 4\}, \{1, 4\}\}$.
    Minimal generators of $j$ are $\{1, 2, 3\}, \{1, 2, 5\}$, $\{3,4\}$ and $\{4, 5\}$.
    Observe that for $1$, $\{2, 3, 5\} \cup \{1, j\}$ is not closed.
    The same goes for $2$ with $\{1, 3, 5\}$, and $3$ with $\{1, 2, 4\}$.
    On the other hand, $\{1, 2, 4\} \cup \{5, j\}$ is closed and $\{2, 3, 5\} \cup \{4, j\}$ too.
    Hence by Lemma~\ref{lem:pred}, one has $\pred(j) = \{4, 5\}$, and $\E(\H_j[\pred(j)]) = \{\{4\}, \{5\}\}$.
    It has a unique minimal transversal which is $\{4, 5\}$. 
    By Theorem~\ref{thm:mintr-characterization}, $\{4,5\}$ is a critical minimal generator of $j$. 
    Hence $45 \rightarrow j$ belongs to $\Sigma$.
    This coincides with $\Sigma$.
\end{example}

\begin{theorem}\label{thm:DFS-tr}
    Let $f,g:\mathbb{N}\rightarrow\mathbb{N}$ be two functions and $(X,\phi)$ be a ranked convex geometry of closure system $\C_\phi$ given by the set $\M=\M(\C_\phi)$ of its meet-irreducible elements. 
    Then there is an algorithm enumerating the ranked implicational base $(X,\Sigma)$ of critical minimal generators of $(X,\phi)$ with delay 
    \[
        O(f(|X|+|\M|)),
    \]
    $O(|X|\cdot |\M| + g(|X|+|\M|))$ space, and after $O(|X|^3\cdot |\M|^2)$ preprocessing time
    if there is one enumerating $Tr(\H)$ with delay $f(|\H|)$ and space $g(|\H|)$ given a hypergraph $\H$.
\end{theorem}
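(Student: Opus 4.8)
The plan is to show that the algorithm outlined before the statement meets the announced bounds. For each $j\in X$ it computes, once and for all, the set $\pred(j)$ and the induced hypergraph $\H_j[\pred(j)]$, and then hands the latter to the assumed oracle enumerating minimal transversals. Correctness is immediate from Theorem~\ref{thm:mintr-characterization}: for a fixed $j$ the oracle outputs exactly $Tr(\H_j[\pred(j)])$, which is precisely the family of critical minimal generators of $j$, that is, the premises of the implications of $\Sigma$ whose conclusion is $j$. Since distinct elements of $X$ label the conclusions of disjoint blocks of $\Sigma$, ranging over all $j\in X$ enumerates $\Sigma$ exactly once. The substance of the proof is therefore (i) organising a preprocessing phase that produces every $\H_j[\pred(j)]$, and (ii) the delay and space accounting.

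First I would set up the preprocessing. The closure $\phi(S)=\bigcap\{M\in\M\mid S\subseteq M\}$ of any $S\subseteq X$ is computable in $O(|X|\cdot|\M|)$ time directly from $\M$, and deciding whether a set is closed is one such computation. To recover the blocks $j^\nearrow$ I use that $\{j^\nearrow\mid j\in X\}$ partitions $\M$ in a convex geometry: for each $M\in\M$ I compute its unique successor $M^+=\bigcap\{M'\in\M\mid M\subsetneq M'\}$ (with $M^+=X$ if this family is empty) in $O(|X|\cdot|\M|)$ time, and assign $M$ to $j^\nearrow$ for the element $j\in M^+\setminus M$, a singleton by the partition property. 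With the partition in hand I build each $\H_j$ from $\E(\H_j)=\{X\setminus M\mid M\in j^\nearrow\}$ and compute $\pred(j)$ through Lemma~\ref{lem:pred}: for every vertex $a\in V(\H_j)$ I choose some $M\in j^\nearrow$ with $a\notin M$ and keep $a$ exactly when $M\cup\{a,j\}$ is closed. Each closedness test costs $O(|X|\cdot|\M|)$; with $O(|X|)$ candidates $a$ per $j$ and $O(|X|)$ values of $j$, computing all the sets $\pred(j)$ costs $O(|X|^3\cdot|\M|)$, while the partition costs $O(|X|\cdot|\M|^2)$, so the whole preprocessing stays within the announced $O(|X|^3\cdot|\M|^2)$. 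Restricting each $\H_j$ to $\pred(j)$ to form $\H_j[\pred(j)]$ is then immediate.

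Next I would bound the delay and space. Each induced hypergraph satisfies $|\H_j[\pred(j)]|=|\pred(j)|+|j^\nearrow|\le |X|+|\M|$, so, $f$ being non-decreasing, the oracle enumerates $Tr(\H_j[\pred(j)])$ with delay at most $f(|X|+|\M|)$ and space $g(|X|+|\M|)$; hence consecutive outputs within a fixed $j$ are separated by delay $O(f(|X|+|\M|))$. To keep this bound across the boundary between two elements $j$, I would precompute, during preprocessing, all the hypergraphs together with the list of the \emph{productive} $j$, i.e.\ those with $\pred(j)\neq\emptyset$. A $j$ fails to be productive exactly when $\pred(j)=\emptyset$, in which case $\H_j[\pred(j)]$ has an empty edge and no transversal; conversely, when $\pred(j)\neq\emptyset$ the set $\pred(j)$ is a transversal of $\H_j$, so $\H_j[\pred(j)]$ has no empty edge and the oracle outputs at least one generator. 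Iterating over the precomputed productive elements only, the algorithm never idles on a $j$ producing nothing, and switching from one productive $j$ to the next reduces to the terminal delay of one oracle call followed by the start of the next, both $O(f(|X|+|\M|))$. For the space, note that in a convex geometry each $j^\nearrow$ is non-empty and the $j^\nearrow$ partition $\M$, whence $|\M|\ge|X|$; consequently storing $\M$, the partition, all the sets $\pred(j)$ (of total size $O(|X|^2)$) and the hypergraphs fits in $O(|X|\cdot|\M|)$ space, to which the oracle adds $g(|X|+|\M|)$, matching the claim.

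The step I expect to be the most delicate is securing the delay $O(f(|X|+|\M|))$ rather than $O(|X|\cdot f(|X|+|\M|))$. In contrast with the recursive enumeration of Theorem~\ref{thm:DFS-mis}, here there is a single oracle call per element $j$, so any $|X|$ blow-up would come solely from traversing barren elements between two useful ones; precomputing the productive elements via the characterisation of $\pred(j)$ from Lemma~\ref{lem:pred} removes this. The rest — closure and closedness tests read off from $\M$, successor computations $M^+$ used to recover the partition $\{j^\nearrow\}$, and the restriction of $\H_j$ to $\pred(j)$ — is routine and comfortably within the stated preprocessing budget.
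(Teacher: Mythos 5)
Your proof is correct and follows essentially the same route as the paper's: preprocess the partition $\{j^\nearrow \mid j \in X\}$, the sets $\pred(j)$ via Lemma~\ref{lem:pred} and the hypergraphs $\H_j[\pred(j)]$, then hand each to the transversal oracle, with correctness given by Theorem~\ref{thm:mintr-characterization}. Your explicit handling of the delay across elements $j$ with no critical minimal generator (pre-identifying the productive $j$ and noting that $\pred(j)\neq\emptyset$ forces at least one output) is a point the paper's proof leaves implicit, and your successor-based recovery of the partition and single-witness application of Lemma~\ref{lem:pred} are valid minor variants that stay within the stated preprocessing budget.
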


\begin{proof}
    By Theorem~\ref{thm:mintr-characterization}, $A\subseteq X$ is a critical minimal generator of $j\in X$ if and only if it is a minimal transversal of the hypergraph $\H_j[\pred(j)]$.
    Note that the closure $\phi(S)$ of a set $S\subseteq X$ can be computed in $O(|X|\cdot |\M|)$ time by intersecting every $M\in \M$ such that $S\subseteq M$.
    Given $M\in \M$, computing $j\in X$ such that $M\in j^\nearrow$ can be done in $O(|\M|\cdot |X|^2)$ time by checking whether $M\cup \{j\}$ is closed for every $j\in X\setminus M$.
    Hence, computing the partition $\{j^\nearrow \mid j\in X\}$ of $\M$ can be done in $O(|\M|^2\cdot |X|^2)$ time and $O(|X|\cdot |\M|)$ space.
    Using Lemma~\ref{lem:pred}, the sets $\pred(j)$, $j\in X$ can be computed in $O(|X|^3\cdot |\M|^2)$ time and $O(|X|^2)$ space by checking for every $j\in X$ and $a\in X$ whether there exists $M\in j^\nearrow$ such that $a\not\in M$ and $M\cup \{a,j\}$ is closed.
    Note that $O(|X|^2)$ is bounded by $O(|X|\cdot |\M|)$ as $\{j^\nearrow \mid j\in X\}$ is a partition of $\M$.
    In addition, the hypergraphs $\H_j[\pred(j)]$, $j\in X$ can be computed in $O(|X|^2\cdot |\M|)$ time and $O(|X|\cdot |\M|)$ space as 
    \[
      \sum_{j\in X} |\E(\H_j)|=|\M|.
    \]
    Ultimately, this computation can be achieved in $O(|X|^3\cdot |\M|^2)$ preprocessing time and using $O(|X|\cdot |\M|)$ space.
    
    Now by assumption and since $|\H_j[\pred(j)]| \leq |X|+|\M|$, the critical minimal generators of $j\in X$ can be enumerated with $f(|X|+|\M|)$ delay and using $g(|X|+|\M|)$ space.
    The theorem follows.
\end{proof}

A trivial bound of $f(|\H|+|Tr(\H)|)$ on the delay of an output-polynomial time algorithm enumerating $Tr(\H)$ within the same time yields an output quasi-polynomial time algorithm constructing the ranked implicational base of a ranked convex geometry given by the set of its meet-irreducible elements, using the algorithm of Fredman and Khachiyan~\cite{fredman1996complexity} as a subroutine for the enumeration of $Tr(\H)$.
If furthermore every meet-irreducible element $M\in \M(\C_\phi)$ is such that $|M|\geq |X|-k$ for some fixed integer $k\in \mathbb{N}$, then the described algorithm performs in output-polynomial time using the algorithm of Eiter and Gottlob in~\cite{eiter1995identifying}.

As in the previous section, observe that the characterization of Lemma~\ref{lem:pred} may go beyond the scope of ranked convex geometries. 
In particular, the algorithm of Theorem~\ref{thm:DFS-tr} will correctly output the critical base of the convex geometry given in Example~\ref{ex:open-pbs}.
This yields the next question.

\begin{problem}\label{op:sigma}
    For which closure spaces does Theorem~\ref{thm:mintr-characterization} hold?
\end{problem}

\section{Conclusion}\label{sec:conclusion}

In this paper, we investigated the complexity of translating between meet-irreducible elements and unit-minimum implicational bases in acyclic convex geometries.
Even when restricted to this class, we showed that the problem is harder than the dualization in distributive lattices, a generalization of the hypergraph dualization problem for which the existence of an output quasi-polynomial time algorithm is open.\footnote{We would like to mention that an output quasi-polynomial time algorithm for the dualization in distributive lattices has recently been announced by Elbassioni during the rewiewing process of this article \cite{elbassioni2020dualization}.}
Then, we considered a proper subclass of acyclic convex geometries, namely ranked convex geometries, as those that admit a ranked implicational base analogous to that of ranked posets. 
For this class, we provided output quasi-polynomial time algorithms based on hypergraph dualization for translating between the two representations.

Several questions arise for future research.
First, we would like to know how implications with premises of size one can be integrated to our result, regardless of the rank.
It seems that it only barely affects the strategy employed in Section~\ref{sec:mis} as long as these implications do not link the elements of a same rank.
This observation relates to Example~\ref{ex:open-pbs} and Open problems~\ref{op:meet} and~\ref{op:sigma}.
Note however that in general acyclic convex geometries, \textsc{CMI} is harder than the dualization in distributive lattices (c.f.~Theorem~\ref{thm:dual-hard}), and that the computation of $j^\nearrow$ is even harder than the dualization in lattices given by acyclic implicational bases (c.f.~Theorem~\ref{thm:dual-hard} and Equality~\ref{eq:dual-hard-antichain}).

Also, we leave open the question whether ranked convex geometries can be identified in polynomial time from meet-irreducible elements.
As shown in Proposition~\ref{prop:conp}, this problem lies in co{\sf NP}.

\bibliographystyle{alpha}
\bibliography{main}

\end{document}